\newtheorem{theorem}{Theorem} 
\newtheorem{remark}{Remark}
\patchcmd{\@begintheorem}{\textit}{\textbf}{}{}
\newtheorem{lemma}{Lemma}
\newcommand*{\algrule}[1][\algorithmicindent]{%
 \makebox[#1][l]{%
 \hspace*{.2em}
 \vrule height .75\baselineskip depth .25\baselineskip
 }
}
\def\ALG@printindent{%
 \ifnum \theALG@nested>0
 \ifx\ALG@text\ALG@x@notext
 \else
 \unskip
 \ALG@printindent@tempcnta=1
 \loop
 \algrule[\csname ALG@ind@\the\ALG@printindent@tempcnta\endcsname]%
 \advance \ALG@printindent@tempcnta 1
 \ifnum \ALG@printindent@tempcnta<\numexpr\theALG@nested+1\relax
 \repeat
 \fi
 \fi
}
\patchcmd{\ALG@doentity}{\noindent\hskip\ALG@tlm}{\ALG@printindent}{}{\errmessage{failed to patch}}
\patchcmd{\ALG@doentity}{\item[]\nointerlineskip}{}{}{} 
\newcommand{\FC}{\textcolor{blue}}
\newcommand{\CS}[1]{\textcolor{magenta}{#1}}
\def \fwidth{0.7\linewidth}
\def \fheight {0.3\linewidth}
\def \sfwidth{0.99\linewidth}
\def \sfheight {0.6\linewidth}
\definecolor{color0}{HTML}{FFD700}
\definecolor{color1}{HTML}{FFB14E}
\definecolor{color2}{HTML}{FA8775}
\definecolor{color3}{HTML}{EA5F94}
\definecolor{color4}{HTML}{CD34B5}
\definecolor{color5}{HTML}{9D02D7}
\definecolor{color6}{HTML}{0000FF}
\pgfplotsset{every axis/.append style={
                    label style={font=\scriptsize},
                    tick label style={font=\scriptsize}  
                    }}
\pgfplotsset{compat=1.17}
\begin{document}


\title{Statistical Characterization of \\ Closed-Loop Latency at the Mobile Edge} 

\author{ Suraj Suman,~\IEEEmembership{Member, IEEE}, Federico Chiariotti,~\IEEEmembership{Member, IEEE}, \v Cedomir Stefanovi\' c,~\IEEEmembership{Senior Member, IEEE}, Strahinja Do\v sen,~\IEEEmembership{Member, IEEE}, and Petar Popovski,~\IEEEmembership{Fellow, IEEE}
\thanks{ Suraj Suman, Federico Chiariotti, \v Cedomir Stefanovi\' c, and Petar Popovski are with Department of Electronic Systems, Aalborg University, Denmark. Strahinja Do\v sen is with Department of Health Science and Technology, Aalborg University, Denmark. }}

\maketitle

\vspace{-2.15cm} 


\begin{abstract}
The stringent timing and reliability requirements in mission-critical applications require a detailed statistical characterization of the latency. Teleoperation is a representative use case, in which a human operator (HO) remotely controls a robot by exchanging command and feedback signals. We present a framework to analyze the latency of a closed-loop teleoperation system consisting of three entities: HO, robot located in remote environment, and a Base Station (BS) with Mobile edge Computing (MEC) capabilities. A model of each component of the system is used to analyze the closed-loop latency and decide upon the optimal compression strategy. 
The closed-form expression of the distribution of the closed-loop latency is difficult to estimate, such that suitable upper and lower bounds are obtained. We formulate a non-convex optimization problem to minimize the closed-loop latency. Using the obtained upper and lower bound on the closed-loop latency, a computationally efficient procedure to optimize the closed-loop latency is presented. The simulation results reveal that compression of sensing data is not always beneficial, while system design based on average performance leads to under-provisioning and may cause performance degradation. The applicability of the proposed analysis is much wider than teleoperation, for systems whose latency budget consists of many components.
\end{abstract}

\begin{IEEEkeywords}
Mission-critical communications, teleoperation, real-time systems, telerobotics, human-machine interaction, mobile edge computing, low-latency high-reliability 
\end{IEEEkeywords}


\section{Introduction} 

The Tactile Internet (TI)~\cite{Fettweis_TI} is a fairly recent concept that involves the transmission of tactile sensations along with data, text, and multimedia. 
The ability to receive tactile stimulation enhances the immersion of the user in Virtual Reality (VR), and improves control performance in teleoperation, in which a human operator (HO) controls and manipulates a remotely located robot or object~\cite{EMG_teleoperation_1}. 
This involves a two-way exchange of data: commands from the operator and tactile and feedback from the remote environment, creating a closed-loop system \cite{antonakoglou2018toward}.
Advanced Human-to-Machine Interaction (HMI) in teleoperation involves the exchange of abundant sensory data (including textual, visual, and tactile information), which ensures that the HO can have an intuitive and precise interaction with the remote environment, improving the task execution accuracy and efficiency, but also putting a significant strain on the communication system.

Due to its interactive nature, teleoperation is highly sensitive to communication impairments.
The concept of motion-to-photon delay~\cite{zhao2017estimating}, often used in VR, is extremely relevant here: if we measure the closed-loop latency between the moment an action is performed by the operator and the moment that they get the related feedback, we can gauge the Quality of Experience (QoE) for the operator, as well as the final control performance.
Besides low latency, teleoperation requires a high reliability, as the communication channel losses may degrade the control precision.
Moreover, reliable low-latency systems are effectively~\emph{transparent}~\cite{transparency_work}, i.e., the operator does not notice the remote nature of the environment, and feels as if they were controlling the robot directly.
This ensures the immersion of the operator in the remote environment, improving both QoE and control performance.
On the other hand, long delays and high jitter can both deteriorate the user experience and jeopardize the stability of the
closed-loop teleoperation system, as the reactions to the events in the remote environment are delayed and  sluggish~\cite{stability_td_hri}.  {{For example, comfort zone for performing telesurgery is with round trip latency of around $300$ ms \cite{barba2022remote}, whereas it is around $800$ ms in VR-based haptic teleoperation \cite{valenzuela2019virtual}.  }}


Communication is not the only bottleneck in the system, as closed-loop latency also includes computation and processing times, while the robot might have limited on-board computation capabilities.
As the decoding and execution of the operator's commands along with the compression and processing of the sensing data are part of the teleoperation control loop, these operations can have a significant effect on the latency.
In effect, the success of teleoperation applications strongly depends on the performance of both the communication and the computational segments. In order to fully optimize the teleoperation system to meet the closed-loop latency constraints required by the application, which can be very stringent in mission-critical industrial scenarios, we need to consider all steps of the process.

Latency in networked, closed-loop control systems is not easy to control, as the randomness of the wireless channel and the variable amount of available bandwidth and computation resources make the closed-loop latency a stochastic quantity.
While the effects of a higher latency and jitter are understood from laboratory experiments~\cite{effect_of_delay, low_latency_h2m, closing_the_force_loop}, existing schemes optimize only for the \emph{average} latency~\cite{IoT_TI, NFV_e2e_QoS,onu_2}, without providing any reliability guarantees.
Furthermore, most teleoperation system designs~\cite{MEC_tactility, MEC_TII, hybrid_caching_TI, TelSurg} do not take into account the computational delay into the latency calculation, effectively leaving out part of the control loop and potentially disregarding an important component of the closed-loop latency.

\begin{figure}[t!]
    \centering
    \input{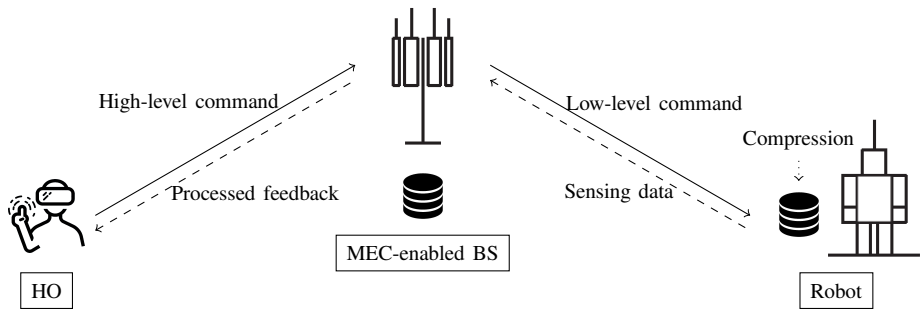}
\caption{System model for closed-loop MEC-enabled teleoperation system.} 
\label{fig:sys_mod_MEC_TI} 
\end{figure}

In this work, we present an analysis of a Mobile Edge Computing (MEC)~\cite{MEC_survey, MEC_survey_IoTJ} system for teleoperation, in which the most computationally-intensive tasks are offloaded to the MEC-enabled Base Station (BS) handles  as shown in Fig. \ref{fig:sys_mod_MEC_TI}. A preliminary version of this work was presented in~\cite{our_icc}, focusing only on the latency of an uplink connection, rather than a closed-loop latency, from a multi-sensor robot to the BS. This work generalizes the framework to analyze the latency of the closed-loop teleoperation system and formulate an optimization problem to minimize the closed-loop latency with high reliability under statistical constraints. The key contributions of the paper are as follows: 

\begin{enumerate}
\item The system model for closed-loop MEC-enabled teleoperation system is presented, where the command and sensing data are exchanged over wireless network through a BS. A MEC server on the BS processes the command and sensing data in order to be sent to the robot and the feedback for the HO, respectively, enabling the closed-loop operation.

\item Two system design possibilities are considered and compared. In the first one, the robot located in the remote environment compresses the raw sensing data first, then transmits it to the BS, which decompresses and processes it to extract the user readable feedback that is transmitted back to the HO. In the second scenario, the robot transmits the raw sensing data to the BS without compression, where it is processed by the MEC server. 

\item The closed-loop latency for both scenarios is analyzed by dividing it into three components: the duration of the compression operation on the sensing data, the transmission delays for commands and feedback, and the decompression and computation at the MEC server. All these latency components are modelled as independent random variables (RVs). Thus, the closed-loop latency, which is the sum of all these components, is also an RV. We characterize the nature of its Probability Density Function (PDF) and Cumulative Distribution Function (CDF) and obtain tractable upper and lower bounds on the closed-loop latency for both scenarios. 

\item A non-convex optimization problem is formulated, aiming to minimize the the closed-loop latency in the statistical sense. 
Using the obtained upper and lower bounds on the closed-loop latency, we present a computationally efficient procedure to estimate the optimal closed-loop latency and solve the problem.

\item We analyze the performance of the schemes by simulation, and find that the simulation results reveal that compression of sensing data is not always beneficial. The decision on data compression depends on the communication system parameters, as well as the computational capability of the robot. The proposed approach is also compared with the system design that is optimized in average sense, as reported in the prior works. The comparative analysis reveals that system design in average sense leads to under-provisioning and causes a significant performance degradation.  
\end{enumerate}
While this work focuses on the latency analysis for teleoperation, the basic framework we propose can be used for any cascaded system with random latency components, such as  Open Radio Access Network (O-RAN) \cite{o-ran} systems. The O-RAN architecture is envisioned to execute networking processes in software, making network components' behavior programmable. Telecom operators will use the standardized interfaces to control multi-vendor infrastructures. In the context of O-RAN, the proposed framework will be very useful to analyze the latency incurred across multiple software and hardware components from multiple vendors in order to maximize user QoE.

The rest of this paper is organized as follows. Section~\ref{sec:related} presents related work on the subject, while the basic model of a MEC-enabled teleoperation system is presented in Section~\ref{sec:system}. The different delay components of the closed-loop latency and their distributions are discussed in Section~\ref{sec:latency}, and the overall closed-loop latency distribution is estimated and optimized in Section~\ref{sec:analysis}. Finally, the numerical simulation results are discussed in Section~\ref{sec:results}, followed by the concluding remarks in Section~\ref{sec:conc}. 

\section{Related Work}\label{sec:related}

Latency is a major issue in teleoperation systems, and several studies \cite{effect_of_delay, low_latency_h2m, closing_the_force_loop} have investigated its impact on control performance. The study in \cite{effect_of_delay} describes an experiment that uses a haptic device to generate feedback, presenting the visual three-dimensional environment to the user on a monitor and studying the effect of latency between the participant’s actual action and the visible movement on the monitor. A commercial haptic teleoperation system \cite{cyber_glove} was used in \cite{low_latency_h2m}, which allowed HOs to touch and grasp the computer-generated virtual objects. This experiment demonstrated that the average latency increases significantly with the network load. A closed-loop compensatory tracking task is performed in the closed-loop control using tactile input in \cite{s_dosen_ET_fb}, where the feedback is encoded to the user using frequency and amplitude modulation schemes. Significant time delay on the order of several hundred milliseconds has been noticed in this experiment.

Most existing low-latency teleoperation schemes have tried to minimize the \emph{average} latency, which is an easier target, and neglect the required reliability targets in statistical sense. Even with fiber-wireless (FiWi) networks, existing optimization works are limited to average guarantees~\cite{onu_2}. 
In this scenario, the limiting factors are the availability, skill set, distance to task location, and remaining energy of robots~\cite{onu_1}, or the association between tasks and HOs~\cite{onu_3}. The study in \cite{coll_computing} presents a task allocation strategy by combining suitable host robot selection and computation task offloading onto collaborative nodes in the FiWi infrastructure. The conventional Cloud, decentralized cloudlets, and neighboring robots as collaborative nodes are used for computation offloading. 
Cross-layer techniques for low-latency teleoperation have also been considered in the literature~\cite{cross_layer_3, TI_channel_access, cross_layer_1}. TI cross-layer transmission optimization is investigated in~\cite{cross_layer_3} by considering the transmission delay, error probability and statistical queuing delay requirements, using a proactive packet dropping mechanism to limit latency. A resource allocation mechanism to maximize the uplink sum rate of traditional data while satisfying the delay requirements for tactile data is presented in \cite{cross_layer_1} using sparse code multiple access. The study in \cite{TI_channel_access} estimates the average latency from a hub to an access point for tactile body-worn devices connected using an IEEE 802.11 network.

In general, support for teleoperation applications based on Network Function Virtualization (NFV) is included in the 5G network architecture~\cite{NFV_JSAC, IoT_TI, NFV_e2e_QoS}. The study in~\cite{IoT_TI} presents a utility function based model to evaluate the performance of the NFV-based TI by considering the human perception resolution and the network cost of completing services. The utility function depends on the average round-trip delay, network link bandwidth, and node virtual resource consumption. The joint radio and NFV resources for a heterogeneous network are allocated in \cite{NFV_e2e_QoS} by guaranteeing average end-to-end delay of each tactile user, including the queuing, transmission, and computation delays. MEC offloading is another possibility for TI applications~\cite{TelSurg,MEC_TII}. A trade-off between the average service response time and power usage efficiency is investigated in \cite{MEC_tactility} for local and cooperative MEC. This can be optimized according to QoE metrics as well~\cite{MEC_TII}, or using more advanced caching techniques~\cite{hybrid_caching_TI}. Finally, a real-time network architecture for remote surgery application is presented in \cite{TelSurg}, employing cloud and MEC networks to satisfy the timing constraints, which are still expressed in terms of the average end-to-end delay.

Overall, existing experimental works have mostly considered simple links with controllable latency, while existing architectures only dealt with average latency, and often disregarded the contribution of the computational component of the closed-loop latency.
The main novelty of our work is in the consideration of these factors, estimating the complete distribution of the closed-loop latency in a public Internet setup along with optimizing it for the worst-case scenario by considering statistical guarantees to ensure a more stable performance than average latency minimization.

\section{System Model} \label{sec:system}
The model of the considered MEC-enabled teleoperation system is shown in Fig. \ref{fig:sys_mod_MEC_TI}: the robot, which is located in the remote environment, is instructed by the HO, who receives feedback information that guides his decisions. 
The instruction sent to the robot is denoted as the \emph{command signal}, whereas the the information received from the robot as the \emph{sensing data}. The command signal from the HO and the sensing data from the robot are exchanged over a wireless connection to the BS. The volume of sensing data generated is much larger than the command signal, because it can include throughput-intensive formats such as video and tactile data, along with other types of data such as audio, image, or text. The sensing data acts as feedback to the HO for further command instructions to the remote robot. The transmission of this potentially large volume of data over a wireless connection is expensive in terms of required radio resources, and may require compression before the data is transmitted. 

The alternative is to process the sensing data at the robot itself and extract the user-readable feedback signal, but this may be very computational-intensive and even far beyond the capacity of the robot. On the other hand, commands from the HO are typically not compressed, and represent high-level instructions to the robot. The MEC server then translates these high-level commands to low-level commands to the robot's actuators, which can be directly executed. As the size of command signals is typically much smaller than the sensing data from the robot (commands usually come from a limited set), the compression of command data is not within the scope of our work  \footnote{Note that the analytical framework presented here can be straightforwardly extended if the command data also gets compressed at the HO side before transmission.}.

The MEC server equipped at the BS acts as a decision-support system that handles the data-intensive computation task by processing the sensing data from the robot. 
Such processed data is communicated to the HO. 
In the system model shown in Fig. \ref{fig:sys_mod_MEC_TI}, the robot compresses the sensing data locally and transmits this compressed data to the BS. The MEC server at the BS first decompresses the data, processes it, and sends the processed data having user readable command to the HO. Based on the received feedback, the HO decides on the command signal for the robot located in the remote environment.
Thus, the command and sensing signals exchanged over wireless medium through a BS form a closed-loop teleoperation system connected over two communication links. 
 
\begin{remark} 
 The focus of this work is on communication and computation aspects of the telemanupulation system. Therefore, the latency due to executing the command signal at the robot is not taken into account, because its value is negligible as compared to the latency components shown in Fig.~\ref{fig:depiction_delay}. Further, it depends upon the mechanical property of the robot and application at hand. Apart from it, the latency incurred in the HO's reaction is not considered, as this component is beyond the scope of the work. 
\end{remark}



\begin{remark}
In the following, we assume that the data related to teleoperation application is allocated resources in a private slice by the BS avoiding queuing delay, as its latency-constrained nature requires network support.
\end{remark}



\begin{figure}[t!]
    \centering
    \begin{tikzpicture}[auto]

\node[name=hmd_t] at (0,5) {HO};
\node[name=hmd_b] at (0,-3.5) {};
\node[name=bs_t] at (4,5) {MEC BS};
\node[name=cl_t] at (8,5) {Robot};
\node[name=bs_b] at (4,-3.5) {};
\node[name=cl_b] at (8,-3.5) {};

\draw[-] (hmd_t) to (hmd_b);
\draw[-] (bs_t) to (bs_b);
\draw[-] (cl_t) to (cl_b);

\draw[dashed,->,red, line width= 1.25pt] (0,4.5) to (4,3.5);
 \draw[|-|] (0,4.5) to node[midway,left] {\tiny Transmission} (0,3.5);
 \draw[|-|] (4,3.5) to node[midway,left] {\tiny MEC processing} (4,3);
\draw[dashed,->,red, line width= 1.25pt] (4,3) to (8,2.5);
 \draw[-|] (4,3) to node[midway,left] {\tiny Transmission} (4,2.5);
 \draw[|-|] (8,2.5) to node[midway,left] {\tiny Command execution} (8,2.2);
 \draw[-|] (8,2.2) to node[midway,left] {\tiny Compression} (8,1);
 
\draw[dashed,->,red, line width= 1.25pt] (8,1) to (4,0);
 \draw[-|] (8,1) to node[midway,left] {\tiny Transmission} (8,0);
 \draw[|-|] (4,0) to node[midway,left] {\tiny Decompression} (4,-0.5);
 \draw[-|] (4,-0.5) to node[midway,left] {\tiny MEC processing} (4,-1.5);

\draw[dashed,->,red, line width= 1.25pt] (4,-1.5) to (0,-2);
 \draw[-|] (4,-1.5) to node[midway,left] {\tiny Transmission} (4,-2);
 
  \draw[|-|] (0,-2) to node[midway,left] {\tiny Feedback display} (0,-2.3);

\draw[->] (-2,5) to (-2,-3.5);

\node at (-1.5,-3) {\scriptsize Time};

 \draw[|-|] (8.5,4.5) to node[midway,right] {\scriptsize Command} (8.5,2.2);
 
 \draw[|-|] (10,4.5) to node[midway,right] {\scriptsize Closed-loop latency} (10,-2.3);

 \draw[densely dotted] (0,4.5) to (8.25,4.5);
 \draw[densely dotted] (0,3.5) to (8.25,3.5);
 \draw[densely dotted] (4,3) to (8.25,3);
 \draw[densely dotted] (8.25,2.5) to (4,2.5);
 \draw[densely dotted] (8,2.2) to (8.25,2.2);
 \draw[densely dotted] (8,1) to (8.25,1);
 \draw[densely dotted] (4,0) to (8.25,0);
 \draw[densely dotted] (4,-0.5) to (8.25,-0.5);
 \draw[densely dotted] (4,-1.5) to (8.25,-1.5);
 \draw[densely dotted] (0,-2) to (8.25,-2);
 \draw[densely dotted] (0,-2.3) to (8.25,-2.3);
 
  \draw[-|] (8.5,2.2) to node[midway,right] {\scriptsize Feedback} (8.5,-2.3);

%

%

\end{tikzpicture}
\caption{Depiction of different components of the closed-loop latency.} 
\label{fig:depiction_delay} 
\end{figure}
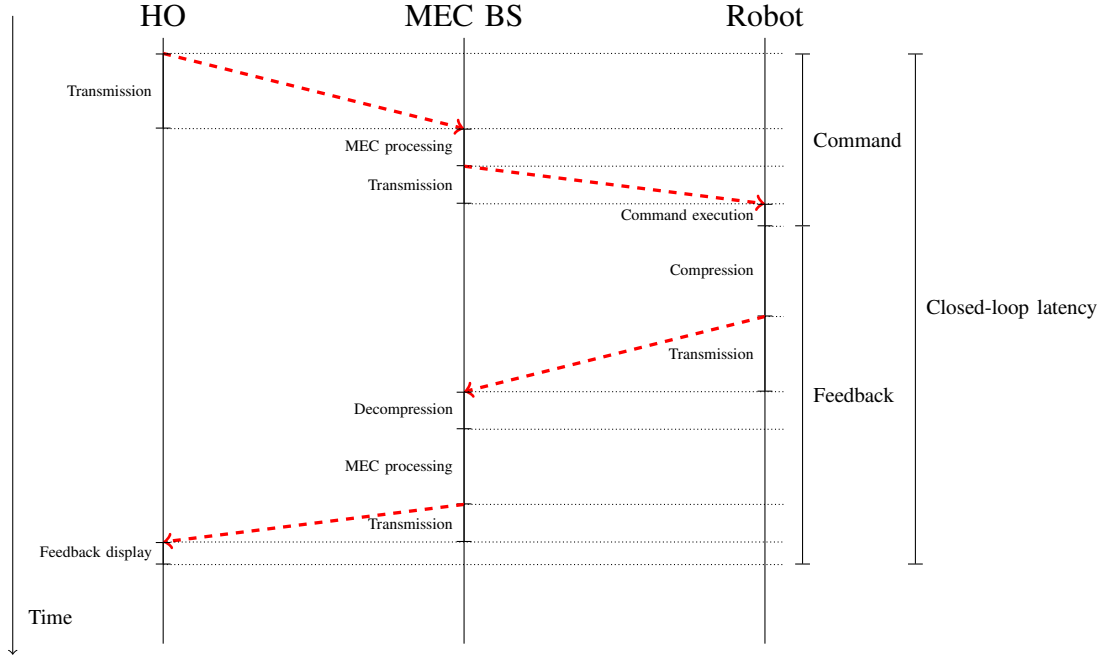

The different delay components of the closed-loop latency, which constitute the delay incurred in exchanging command and sensing signals, are shown in Fig. \ref{fig:depiction_delay} for the system model in Fig.~\ref{fig:sys_mod_MEC_TI}. 
The feed-forward delay consists of the delay incurred in the transmission of the high-level command signal, the processing at MEC (if required), and the transmission of processed data (low-level command). 
 The feedback delay consists of the delay incurred in compressing the sensing data at the robot, the transmission of these compressed data, the decompression and processing at the MEC BS, and finally, in the transmission of this processed user-readable feedback signal to the HO
 \footnote{The propagation delay 
 is not taken into consideration here, as it will be several orders of magnitude lower than the other components of the closed-loop delay for the distances relevant for the considered teleoperation scenario. 
 }.

\section{Latency Components}\label{sec:latency}
In the given context of a MEC-enabled teleoperation system, the closed-loop latency is mainly due to the delay incurred in transmission of command and sensing data, compression at the robot, and decompression and computation at the MEC server. The data transmission delay is random in nature due to uncertainty in the wireless channel, while the randomness of the computation time at the MEC server is related to the uncertainty in the amount of resources allocated for processing. Thus, the overall closed-loop latency is also an RV. In the following, we characterize the probability distributions of different delay components of the closed-loop MEC-enabled teleoperation system.

\subsection{Latency  Incurred in Data Transmission}
We consider a wireless channel with Rayleigh block fading, such that the channel gain $h$ is constant over the length of the packet. Hence, the fading gain $g=|h|^2\sim \exp(1)$, and the gain over subsequent packets is independent and identically distributed.
We consider a transmitter that uses a constant power $P_\text{tx}$ located at a distance $d$ from the receiver. 
As teleoperation applications are extremely sensitive to delay and require dedicated resources, we consider a slicing-enabled 5G system in which a bandwidth $B$ is reserved for the transmission~\cite{aijaz2017shaping}. The signal-to-noise ratio (SNR) $\gamma$ at the receiver is then:
\begin{align} \label{eq:snr}
\gamma(P_\text{tx},d, B) & = K_0 \frac{P_\text{tx} |h|^2}{ d^{\ell} N_0 B} = \gamma_0(P_\text{tx},d, B) \cdot g,
\end{align}
where $K_0$ is the Friss equation parameter, $\ell$ is a path loss exponent depending on the propagation scenario, $N_0$ is the noise power spectral density, and $\gamma_0(P_\text{tx}, d, B) = K_0 \frac{P_\text{tx} }{ d^{\ell} N_0 B}$ is the average SNR. 




We assume that the transmitter can choose the transmission rate, guaranteeing $\epsilon$-outage of the communication link at the receiver~\cite{goldsmith2005wireless} by using Shannon's bound.
The outage probability $\epsilon$ characterizes the probability of packet loss in case of deep fading, when the transmission cannot be decoded. 
 We assume that the data is correctly received if the instantaneous received SNR is higher than $\gamma_\text{th}$. Thus, for a threshold SNR $\gamma_\text{th}$ with outage probability $\epsilon$, the rate $R(\epsilon)$ is
\begin{equation} \label{eq:rate}
R(\epsilon) = B\log_2(1+\gamma_\text{th}).
\end{equation}
The outage probability $\epsilon$ in a Rayleigh fading channel is given by
 \begin{equation}\label{eq:epsilon}
 \epsilon = \Pr \left\{ \gamma < \gamma_\text{th} \right\} = \Pr \left[ g < \frac{\gamma_\text{th}}{\gamma_{0}} \right] = 1 - e^{\left(-\frac{\gamma_\text{th}}{\gamma_{0}}\right)}, 
\end{equation}
where $\mbox{Pr}\{ \cdot \}$ denotes the probability of an event. 
From \eqref{eq:rate} and \eqref{eq:epsilon}, $R(\epsilon)$ can be rewritten as 
\begin{equation} \label{eq:rate_epsilon} 
R(\epsilon) = B\log_2(1+\gamma_\text{th}) = B\log_2 \left( 1+\gamma_{0} \ln \left(\frac{1}{1-\epsilon} \right) \right).
\end{equation}

\begin{remark} \label{rem:rate_reliability}
$R(\epsilon)$ is a monotonically increasing function of $\epsilon$.
\end{remark} 

If the transmitter divides data into packets with a constant length $n_p$ and uses a pass-band modulation, the time $t_p$ to transmit is simply given by:
\begin{equation}
t_p = \frac{n_p}{2BR(\epsilon)}.
\end{equation} 
 As the erasure probability for a packet is $\epsilon$, the total time until correct reception is a geometrically distributed RV. The time elapsed in receiving acknowledgement is very small compared to the data transmission time, and hence it is ignored. 
Thus, the probability mass function (PMF) of the time $\mathcal{T}$ required to transmit a packet is then given by
\begin{equation}
\mbox{Pr}(\mathcal{T}=kt_p) = \epsilon^{k-1} (1-\epsilon), \;\; k \geq 1.
\end{equation} 
The mean and variance of $\mathcal{T}$ are then:
\begin{equation}
\mathbb{E}[\mathcal{T}] = \frac{1}{1-\epsilon}t_p, \;\; \mathbb{E}[(\mathcal{T}-\mathbb{E}[\mathcal{T}])^2]=\frac{\epsilon}{(1-\epsilon)^2} t_p^2.
\end{equation}
If a data block is composed of $N$ packets, the total transmission time for the block is:
\begin{equation}
T_{\text{tx}}(N, \epsilon) = \sum_{i=1}^{N} \mathcal{T}_i. 
\end{equation} 
We note that $T_{\text{tx}}$ is the sum of $N$ identical and independent geometrically distributed RVs. It's PMF is then a negative binomial distribution as follows:
\begin{equation}
 \mbox{Pr}(T_{\text{tx}}=kt_p|N,\epsilon)=\binom{k+N-1}{N-1}\epsilon^{k-N}(1-\epsilon)^N, \;\; k \geq N.
\end{equation}
In most practical cases, $N$ will be relatively large, and we can use the Central Limit Theorem to approximate this distribution to a Gaussian RV as follows:
\begin{equation} \label{eq:t_tx_dist} 
T_{\text{tx}}(N, \epsilon) \sim \mathcal{N}(\mu_{\text{tx}}, \sigma_{\text{tx}}^2), 
\end{equation} 
where $\mu_{\text{tx}} = N \frac{1}{1-\epsilon}t_p$ and $\sigma_{\text{tx}}^2 = N \frac{\epsilon}{(1-\epsilon)^2} t_p^2$.
This approximation substitutes the transmission time from a discrete domain with only positive values with the real domain.
However, as we assume $N\gg1$, the approximation error is negligible.

\begin{remark} 
The transmission time, modeled here using the Gaussian distribution, should be non-negative, but the domain of the Gaussian distribution is $(-\infty, \infty)$. 
It is known that around 99.73\% of the data points will fall within three standard deviations from the mean for a Gaussian distribution. We have noted that $\mu_{\text{tx}} - 4 \sigma_{\text{tx}} \geq 0$ for the numerical values considered in the simulation. Therefore, the left tail of the distribution has a negligible effect on the analysis, and the latency will never be negative. 

\end{remark}

\begin{remark} \label{rem:gaussian_dist_nature} 
The PDF of Gaussian distribution is neither a convex nor a concave function. It is symmetric and exhibits unimodal variation. Further, it is also a log-concave function. 
\end{remark}

\subsection{{Latency  Incurred in Data Processing }} 

An MEC server performs tasks related to computation, and specifically in this case, compression and decompression also. 
The time elapsed in these tasks depend upon the number of central processing unit (CPU) cycles required to process one bit of data, the clock frequency of the MEC server, and the volume of the data to be processed.
In the context of the considered closed-loop teleoperation system (see Fig.~\ref{fig:sys_mod_MEC_TI}), the processing capability of the MEC mounted at BS will be much higher than the robot's~\cite{robot_comput}. Therefore, the computation to extract the low-level command from the raw data is accomplished at the BS rather than at the robot. Here, the time elapsed in different processes are modelled, which will be helpful in analysing the closed-loop latency.

\subsubsection{Latency incurred in computation}
The time elapsed in computation $T_c$ to compute $D_0$ volume of data is given as 
\begin{equation} \label{eq:comp_time_1}
T_{c} = \frac{D_0 X_c}{f_{0}},
\end{equation}
where $X_c$ is the number of CPU cycles and $f_0$ is the frequency of the MEC server.

A recent study~\cite{MEC_random_var} shows that the number of cycles allocated to compute one bit at the MEC server is stochastic in nature. This is because the CPU cycles are allocated to different ongoing tasks at the MEC server simultaneously. The number of CPU cycles required to compute one bit of data is modeled in the relevant literature~\cite{MEC_gamma_1, MEC_gamma_2} as an RV following a Gamma distribution.
Thus, the PDF of $X_c$ is given as
\begin{equation} \label{eq:no_of_cycles_1}
X_c \sim \mbox{Gamma}(\kappa_1, \beta_1) \Rightarrow f_X(x; \kappa_1, \beta_1) = \frac{1}{(\beta_1)^\kappa_1 \Gamma(\kappa_1)} x^{\kappa_1-1} \exp(-{x}/{\beta_1}),
\end{equation}
where $\kappa_1$ is the shape parameter and $\beta_1$ is the scale parameter. $\Gamma(s)= \int_{0}^{\infty} t^{s-1} e^{-t} dt$ is the Gamma function.
 Note that $\mathbb{E}[X_c] = \kappa_1 \beta_1$. 

Now, from \eqref{eq:comp_time_1}, \eqref{eq:no_of_cycles_1}, and the transformation of the PDF of $X_c$, the distribution of computation time at MEC server is given as 
\begin{equation} \label{eq:t_mec_com} 
T_c \sim \mbox{Gamma}\bigg(\kappa_1, \frac{D_0 \beta_1}{f_0}\bigg) \Rightarrow f_{T_c}(t; \kappa_1, \beta_1, D_0, f_{0}) = \bigg( \frac{f_{0}}{D_0 \beta_1}\bigg)^{\kappa_1} \frac{1}{\Gamma(\kappa_1)} t^{\kappa_1-1} \exp \bigg(\frac{-tf_{0}}{D_0 \beta_1}\bigg).
\end{equation}
The expected computation delay $\bar{T}_{c}$ is obtained as follows 
\begin{equation} 
\bar{T}_{c}(\kappa_1, \beta_1, D_0, f_{0}) = \mathbb{E}[T_{c}] = \frac{D_0 \kappa_1 \beta_1}{f_{0}}.
\end{equation}

\subsubsection{Latency incurred in compression} 
The latency of data compression depends on the data volume and computational properties of the device's processor. 
 Specifically, the time elapsed $T_\text{cp}$ in compressing volume of data $D_0$ is given as \cite{jsac_comp_model} 
\begin{equation} \label{eq:comp_time}
T_\text{cp} = \frac{D_0 X_{cp}}{f_\text{0}},
\end{equation}
where $X_{cp}$ is the number of CPU cycles required to compress one bit of data, and $f_\text{0}$ is the frequency (i.e., clock speed) of the processor. 
Analogously to the previous case, $X_{cp}$ is stochastic in nature and follows the Gamma distribution given as follows
\begin{equation} \label{eq:comp_time_exp}
X_{cp} \sim \mbox{Gamma}(\kappa_2, \beta_2),
\end{equation}
where $\kappa_2$ and $\beta_2$ are respectively the shape and scale parameters. 
Note that $\mathbb{E}[X_{cp}] = \kappa_2 \beta_2$.

Thus, $T_\text{cp}$ is also an RV and its PDF is given as
\begin{equation} \label{eq:t_mec} 
\begin{split}
T_{cp} \sim \mbox{Gamma}\bigg(\kappa_1, \frac{D_0 \beta_2}{f_0}\bigg). & 
 \end{split}
\end{equation}

 A lossless data compression is assumed, such that the original data can be perfectly reconstructed from the compressed data without error\footnote{Huffman, run-length, Lempel-Ziv, and bzip2 are some of the most commonly used compression techniques to achieve lossless data compression \cite{data_comp_decomp_1}.} \cite{data_comp_decomp_4}. For lossless compression, the average number of CPU cycles required to compress one bit of raw data is given as~\cite{comm_lett_comp,jsac_comp_model} 
\begin{equation} \label{eq:compresiion_ratio}
\mathbb{E}[X_{cp}] = \kappa_2 \beta_2 = \exp(Q \psi) - \exp(\psi) = C(Q),
\end{equation} 
where $Q\geq 1$ is the compression ratio (i.e., the ratio of the sizes of raw and compressed data) and $\psi$ is a positive constant.

Using \eqref{eq:compresiion_ratio}, the PDF of compression time ${T_\text{cp}}$ is also a RV, which is given as 
\begin{equation} \label{eq:t_mec_comp} 
\begin{split}
 T_{cp} \sim \mbox{Gamma}\bigg(\kappa_2, \frac{D_0 C(Q)}{\kappa_2 f_0}\bigg) . & \\
\end{split}
\end{equation}
The expected value of ${T_\text{cp}}$, $\bar{T}_\text{cp}$ for the compression ratio $Q$ is given as 
\begin{equation} 
\bar{T}_\text{cp}(\kappa_2, \beta_2, D_0, f_{0}, Q) = \mathbb{E}[T_\text{cp}] = \frac{D_0 \mathbb{E}[X_{cp}]}{f_\text{0}} = \frac{D_0 C(Q)}{f_\text{0}}.
\end{equation}

\subsubsection{Latency incurred in decompression}
Decompression refers to the process of restoring compressed data to its original form.
It is also a type of computation, and can be performed on MEC server.
The latency incurred $T_d$ in decompressing $D_0$ amount of data is given as
\begin{equation}
T_d = \frac{D_0 X_{d}}{f_\text{0}},
\end{equation}
where $X_d$ denotes the number of cycles required to decompress one bit of data, which will also follow the Gamma distribution given as 
\begin{equation} \label{eq:decomp_time} 
X_d \sim \mbox{Gamma}(\kappa_3, \beta_3),
\end{equation}
where $\kappa_3$ and $\beta_3$ are respectively the shape and scale parameters. Note that $\mathbb{E}[X_{d}] = \kappa_3 \beta_3$.

Recent work in \cite{data_comp_decomp_1, data_comp_decomp_2, data_comp_decomp_3} shows that if same volume of data is compressed and decompressed then the time elapsed in the decompression process is less than that elapsed in the compression process. Thus, the average number of cycles required in decompression and compression process can be related as follows
\begin{equation} \label{eq:comp_decomp_1}
\mathbb{E}[X_d] = \zeta \mathbb{E}[X_{cp}], 
\end{equation}
where $0 < \zeta < 1$ is a constant.

Using \eqref{eq:comp_decomp_1}, the following can be written,
\begin{equation} \label{eq:comp_decomp}
\kappa_3 \beta_3 = \zeta \kappa_2 \beta_2 = \zeta C(Q). 
\end{equation}

Thus, the decompression time $T_d$ is also an RV, $T_\text{d} \sim \mbox{Gamma}\bigg( \kappa_3, \frac{D_0 \zeta C(Q)}{f_0 \kappa_3} \bigg)$, and its PDF with compression ratio $Q$ is given as
\begin{equation} \label{eq:t_mec_decomp} 
\begin{split}
T_\text{d} \sim \mbox{Gamma}\bigg( \kappa_3, \frac{D_0 \zeta C(Q)}{f_0 \kappa_3} \bigg) & . \\
\end{split} 
\end{equation}

The expected value of $T_d$, $\bar{T}_{d}$, for the compression ratio $Q$, $\bar{T}_{d}$, is obtained as 
\begin{equation} 
\bar{T}_{d}(\kappa_3, \beta_3, D_0, f_{0}, Q) = \mathbb{E}[T_{d}] = \frac{D_0 \kappa_3 \beta_3}{f_{0}} = \frac{D_0 \zeta \kappa_2 \beta_2}{ f_{0}} = \frac{D_0 \zeta C(Q)}{f_{0}} .
\end{equation}

\begin{remark} \label{rem:gamma_dist_nature}
The PDF of Gamma distribution is neither a convex nor a concave function, but it exhibits a unimodal variation. Also, it is not a symmetric distribution \cite{gamma_pdf}. Thus, the PDFs of $T_{c}$, $T_{cp}$, and $T_{d}$ are neither convex nor concave functions, but are all unimodal functions. 
In addition, Gamma distribution is a log-concave function, and hence the PDFs of $T_{c}$, $T_{cp}$, and $T_{d}$ are log-concave functions. 
\end{remark}

\section{Analysis of Closed-Loop Latency of MEC-enabled Teleoperation System} \label{sec:analysis}
Using the results from the previous section, the analytical framework to estimate the closed-loop latency of MEC-enabled teleoperation system shown in Fig. \ref{fig:sys_mod_MEC_TI} is developed here. We assume that the HO transmits all the command data to the BS for processing at the MEC server. On the other hand, two scenarios are analyzed regarding the processing of the raw sensing data at the robot. In the first case, the robot located in the remote environment compresses the raw sensing data first and then transmits these compressed data. The MEC server then decompresses the data to recover the original version, which is processed to extract the user readable command to be transmitted to the HO.
In the second case, the robot does not compress the sensing data, but transmits them in raw form to the BS for further processing at the MEC server to extract the user readable command for the HO.



 Let $D_c$ and $D_s$ be the volume of command and sensing signal, respectively. Let the distance between the HO and the BS be $d_{ho}$, and the same between BS and the robot be $d_{r}$.
 The transmission powers of the HO, the BS, and the robot are $P_{\text{tx}}^{ho}$, $P_{\text{tx}}^{bs}$, and $P_{\text{tx}}^{r}$, respectively. Assume that $B$ amount of bandwidth is dedicated for this closed-loop operation, and the HO, the BS, and the robot transmit over this bandwidth. In the given context, it may be noted that the compression process will be performed at the robot, whereas the decompression and computation processes will be performed at the BS.

Since all processes are executed over the same server, the shape parameter (see \eqref{eq:no_of_cycles_1}, \eqref{eq:comp_time_exp}, and \eqref{eq:decomp_time}) will remain the same for a given processor. On the other hand, the scale parameter will be different for different tasks (computation, compression, or decompression), as different amounts of resources in terms of CPU cycles need to be allocated. Let the shape parameter of the MEC associated with BS be $\kappa_{bs}$, and the scale parameter for computation and decompression at the BS be $\beta_c$ and $\beta_d$, respectively.
Further, let the shape parameter of the processor embedded with robot be $\kappa_r$, and the scale parameter for the compression process be $\beta_{cp}$. 
Finally, let the frequency of the MEC server associated with BS and processor at the robot be $f_{\text{MEC}}$ and $f_R$, respectively. 
Thus, from \eqref{eq:compresiion_ratio} and \eqref{eq:comp_decomp}, we get
\begin{equation}
\begin{split} 
& \kappa_r \beta_{cp} = C(Q), \;\;\; \kappa_{bs} \beta_{d} = \zeta C(Q).
\end{split} 
\end{equation}


\subsection{Case 1: Data Compression at Robot}
The closed-loop latency is composed of the latency incurred during transmitting the command data (from the HO to the robot) and the sensing data (from the robot to the HO). 
{The latency incurred in transmitting command data from HO to robot is composed of the data transmission time, extracting the low-level command from raw command signal at MEC associated at BS, and transmitting the low-level command to the robot from the BS.} 
Thus, referring to Fig. \ref{fig:sys_mod_MEC_TI}, the latency involved in transmitting the command signal ${T_1^c}$ when $N_c$ data packets are to be sent with outage probability $\epsilon$ is given as
\begin{equation}
T_1^c = T_{\text{tx}}^c(N_c, \epsilon) + T_c^c(\kappa_{bs}, \beta_c, D_c,f_{bs}) + T_{\text{tx}}^{pc}(N_c^p, \epsilon),
\end{equation}
where $T_{\text{tx}}^c$ is the time elapsed in transmitting the command signal, see \eqref{eq:t_tx_dist}, and $T_c^c$ is the time taken by MEC server to estimate the low-level command, see \eqref{eq:t_mec_com}.
$T_{\text{tx}}^{pc}$ denotes the time elapsed in transmitting the low-level command (consisting of $N_c^p$ packets) extracted from command signal, see \eqref{eq:t_tx_dist}.
All the constituents of $T_1^c$ are independent RVs. 


The latency incurred in transmitting sensing data from the robot to HO comprises of compression at robot, transmission of compressed data, decompressing the compressed data at MEC associated with BS, extracting the low-level command from sensing data, and transmitting the low-level command to the HO from BS. Thus, referring to Fig. \ref{fig:sys_mod_MEC_TI}, the latency $T_1^f$ incurred in transmitting the compressed sensing data when $N_f$ data packets are to be transmitted with outage $\epsilon$ is given as 
\begin{equation} 
\begin{aligned}\label{eq:clt_11}
T_1^f = {} & T_{cp}^f(\kappa_r, \beta_{cp}, D_s, f_r, Q) + T_{\text{tx}}^f(N_f, \epsilon) + T_d^f(\kappa_{bs}, \beta_d, \frac{D_s}{Q}, f_{bs}, Q) \\
& + T_c^f(\kappa_{bs}, \beta_c, D_s, f_{bs}, Q)+ T_{\text{tx}}^{pf}(N_f^p, \epsilon),
\end{aligned}
\end{equation} 
where $T_r$ denotes the time elapsed in compressing the raw sensing data with compression ratio $Q$ (see \eqref{eq:t_mec_comp}), $T_{\text{tx}}^f$ denotes the time elapsed in transmitting the compressed data (see \eqref{eq:t_tx_dist}), $T_d^f$ denotes the delay incurred in decompressing the compressed sensing data (see \eqref{eq:t_mec_decomp}), and $T_c^f$ denotes the time elapsed in processing the sensing data (see \eqref{eq:t_mec}). $T_{\text{tx}}^{pf}$ is the time elapsed in transmitting the low-level command (having number of packets $N_f^p$) extracted from sensing data to the HO (see \eqref{eq:t_tx_dist}).
All the constituents of $T_1^f$ are independent RVs. 

The volume of the low-level command is much less than the original raw data, i.e., $N_c >> N_c^p$ and $N_f >> N_f^p$, and will not contribute significantly in the closed-loop latency. Using this fact, $T_1^c$ and $T_1^f$ can be written as,
\begin{equation} 
\begin{aligned}\label{eq:delay_cd_fb}
T_1^c \approx {} & T_{\text{tx}}^c(N_c, \epsilon) + T_c^c(\kappa_{bs}, \beta_c, D_c,f_{bs}) ; \\
T_1^f \approx {} & T_{cp}^f(\kappa_r, \beta_{cp}, D_s, f_r, Q) + T_{\text{tx}}^f(N_f, \epsilon) + T_d^f(\kappa_{bs}, \beta_d, \frac{D_s}{Q}, f_{bs}, Q) + T_c^f(\kappa_{bs}, \beta_c, D_s, f_{bs}, Q) . 
\end{aligned}
\end{equation}
The closed-loop latency $T_1$ is estimated as 
\begin{equation} 
\begin{aligned}\label{eq:clt_p}
T_1 = {} & T_1^c + T_1^f.
\end{aligned}
\end{equation} 
Using \eqref{eq:delay_cd_fb}, \eqref{eq:clt_p} can be written as
\begin{equation} 
\begin{aligned}
T_1 = {} & T_{\text{tx}}^c(N_c, \epsilon) + T_c^c(\kappa_{bs}, \beta_c, D_c,f_{bs}) + T_{cp}^f(\kappa_r, \beta_{cp}, D_s, f_r, Q) + T_{\text{tx}}^f(N_f, \epsilon) \\
 {} & + T_d^f(\kappa_{bs}, \beta_d, \frac{D_s}{Q}, f_{bs}, Q) + T_c^f(\kappa_{bs}, \beta_c, D_s, f_{bs}, Q).
\end{aligned}
\end{equation} 
This can be rewritten as
\begin{equation} \label{eq:clt_alpha_0_1} 
\begin{aligned} 
T_1 = {} & \big[T_{\text{tx}}^c(N_c, \epsilon) + T_{\text{tx}}^f(N_f, \epsilon) \big] + T_{cp}^f(\kappa_r, \beta_{cp}, D_s, f_r, Q) \\
& + \left[ T_c^c(\kappa_{bs}, \beta_c, D_c,f_{bs}) + T_d^f(\kappa_{bs}, \beta_d, \frac{D_s}{Q}, f_{bs}, Q) + T_c^f(\kappa_{bs}, \beta_c, D_s, f_{bs}, Q) \right] .
\end{aligned}
\end{equation}



The expected value of $T_1$, $\mu_{T_1}$, is given as
\begin{equation} \label{eq:mean_t_1}
\begin{aligned}
\mu_{T_1} = {} & N_c \frac{1}{1-\epsilon}t_p + \frac{ D_c \kappa_{bs} \beta_c}{f_{bs}} + \frac{D_s C(Q)}{f_r} + N_f \frac{1}{1-\epsilon}t_p + \frac{\zeta D_s C(Q)}{Q f_{bs}} + \frac{ D_s \kappa_{bs} \beta_{c}}{f_{bs}} .
\end{aligned}
\end{equation}

{ It may be noted that all the latency constituents of $T_1$ (see \eqref{eq:clt_alpha_0_1}) are independent RVs, and hence the closed-loop latency $T_1$ is also an RV. $T_1$’s constituent distributions $T_c^c, T_d^f$, and $T_c^f$ follow Gamma distribution with different scale and shape parameters, whereas $T_{\text{tx}}^c$ and $T_{\text{tx}}^f$ follow Gaussian distribution. It is very difficult to estimate the closed-form expression of the PDF of the RV $T_1$. Therefore, it is very important to characterize its properties for further analysis. }

\begin{lemma} \label{lemma_cvx} 
The PDF of the sum of two independent RVs is convex iff at least one of them is convex. In the same way, the PDF of the sum of two independent RVs is concave iff at least one of them is concave. 
\end{lemma}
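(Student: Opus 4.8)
The plan is to reduce everything to the convolution formula for the density of a sum of independent random variables: if $X$ and $Y$ are independent with densities $f_X$ and $f_Y$, then $Z=X+Y$ has density $f_Z(z)=\int_{-\infty}^{\infty} f_X(z-t)\,f_Y(t)\,dt$. Both implications of the lemma are then read off from this representation together with the defining properties of a density, namely $f_X,f_Y\ge 0$ and $\int f_X=\int f_Y=1$.

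For the sufficiency (\emph{if}) direction, assume without loss of generality that $f_X$ is convex. For fixed $z_1,z_2$ and $\lambda\in[0,1]$ I would write $\lambda z_1+(1-\lambda)z_2-t=\lambda(z_1-t)+(1-\lambda)(z_2-t)$, apply convexity of $f_X$ pointwise in $t$, multiply by the nonnegative weight $f_Y(t)$, and integrate, obtaining $f_Z(\lambda z_1+(1-\lambda)z_2)\le\lambda f_Z(z_1)+(1-\lambda)f_Z(z_2)$; that is, $f_Z$ is a nonnegative superposition of translates of the convex function $f_X$, and convexity is preserved under translation and nonnegative mixing. Replacing $\le$ by $\ge$ gives the concave statement verbatim, and since the two summands play symmetric roles, this establishes the direction ``at least one convex (concave) $\Rightarrow$ sum convex (concave)''.

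For the necessity (\emph{only if}) direction I would argue by contraposition: supposing that neither $f_X$ nor $f_Y$ is convex, I want to produce three points of the domain at which $f_Z$ violates the convexity inequality. Assuming enough regularity I would use $f_Z''=f_X''\ast f_Y=f_X\ast f_Y''$, so that convexity of $f_Z$ would force both $f_X''\ast f_Y\ge 0$ and $f_X\ast f_Y''\ge 0$ identically; I would then combine this with $\int f_Y=1$ and the tail/support behaviour of the densities (in the setting of this paper one may also invoke that a genuine probability density on $\mathbb{R}$ cannot be everywhere convex or everywhere concave) to force a contradiction. The nonsmooth case is handled the same way with second derivatives replaced by second differences, and the concave case is symmetric.

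The sufficiency direction is routine --- it is just carrying a convex-combination inequality through an integral. The step I expect to be the main obstacle is the necessity direction: because convolution is a smoothing operation, a localized failure of convexity in one factor can be ``averaged away'' by the other, so the contrapositive has to be set up carefully --- exploiting the normalization and the behaviour of the densities near the boundary of their support --- to guarantee that non-convexity of a summand is genuinely inherited by $f_Z$ rather than smoothed out.
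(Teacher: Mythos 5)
Your sufficiency argument is exactly the paper's proof: write $f_Z(\lambda z_1+(1-\lambda)z_2)=\int f_Y(t)\,f_X\bigl(\lambda(z_1-t)+(1-\lambda)(z_2-t)\bigr)\,dt$, apply convexity of $f_X$ pointwise under the integral, and use nonnegativity of $f_Y$ to preserve the inequality; the concave case is the mirror image. That half is correct and coincides with Appendix A of the paper essentially line for line.

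The necessity (``only if'') direction is where the genuine gap lies, and you have not closed it: your sketch (differentiate the convolution, use $f_Z''=f_X''\ast f_Y$, then appeal to normalization and tail behaviour) is a plan rather than a proof, and as you yourself observe, convolution smooths, so nothing in the outline rules out a localized failure of convexity in $f_X$ being averaged away by $f_Y$. There is also a more structural problem: a probability density on $\mathbb{R}$ can never be globally convex (a nonnegative, integrable, convex function on $\mathbb{R}$ is identically zero) nor globally concave (it would have to be constant), so read literally the ``only if'' claim is vacuous, and read on the supports it is doubtful. You should know, however, that the paper's own proof has exactly the same gap: it establishes only the ``if'' direction and never addresses the converse, even though the converse is the direction actually invoked later (in the remarks asserting that $f_{T_1}$ and $f_{T_2}$ are neither convex nor concave because none of their summands are). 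So the part you completed matches what the paper proves, and the part you left open is genuinely open in the paper as well; the cleanest repair is to drop the ``iff'' and instead note that $f_{T_1}$, being a density, cannot be globally convex or concave for the elementary integrability reason above.
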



 \begin{proof}
 See Appendix \ref{lemma_1}.
 \end{proof}

\begin{remark} \label{rem:pdf_t_1}
The PDF of $T_1$ is neither a convex nor a concave function of $t$, because none of its constituent distributions are either convex or concave (see Lemma \ref{lemma_cvx}). 
\end{remark}

\begin{theorem}
The CDF of $T_1$ is neither a convex nor a concave function of time. 
\end{theorem}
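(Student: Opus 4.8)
The plan is to leverage the structural facts already established: the PDF of $T_1$ is a convolution of Gaussian densities (from $T_{\text{tx}}^c, T_{\text{tx}}^f$) and Gamma densities (from $T_c^c, T_d^f, T_c^f$), and by Lemma~\ref{lemma_cvx} together with Remarks~\ref{rem:gaussian_dist_nature} and~\ref{rem:gamma_dist_nature}, this PDF $f_{T_1}$ is neither convex nor concave, but it is unimodal and log-concave (log-concavity is preserved under convolution, and every constituent is log-concave). The CDF is $F_{T_1}(t) = \int_{-\infty}^{t} f_{T_1}(s)\,ds$, so $F_{T_1}'(t) = f_{T_1}(t) \geq 0$ and $F_{T_1}''(t) = f_{T_1}'(t)$. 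Thus the second derivative of the CDF is exactly the first derivative of the PDF, and the convexity/concavity of $F_{T_1}$ on any interval is governed by the sign of $f_{T_1}'$.

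The key steps, in order: First, I would record that $f_{T_1}$ is a bona fide density that is continuous and differentiable on $\mathbb{R}$ (being a convolution of smooth densities), is strictly positive on an interval, and vanishes (or decays) in the tails; hence it is not monotone. Second, I would invoke unimodality: since $f_{T_1}$ is log-concave and non-degenerate, it has a unique mode $t^\star$ with $f_{T_1}'(t) > 0$ for $t < t^\star$ and $f_{T_1}'(t) < 0$ for $t > t^\star$ (strictly, on the support). Third, translating via $F_{T_1}'' = f_{T_1}'$: on $(-\infty, t^\star)$ we have $F_{T_1}'' > 0$, so $F_{T_1}$ is strictly convex there, while on $(t^\star, \infty)$ we have $F_{T_1}'' < 0$, so $F_{T_1}$ is strictly concave there. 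Since $F_{T_1}$ is convex on one sub-interval and concave on another, it is neither convex nor concave on $\mathbb{R}$, which is the claim. (Equivalently: $t^\star$ is an inflection point of the CDF, and the sign change of $F_{T_1}''$ across it rules out global convexity or concavity.)

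The main obstacle — really the only subtle point — is justifying that $f_{T_1}$ genuinely changes the sign of its derivative, i.e., that the convolution is truly unimodal with a nondegenerate mode rather than, say, flat or monotone on its support. This is where log-concavity does the work: the convolution of log-concave densities is log-concave (a standard preservation result, already flagged in Remark~\ref{rem:gaussian_dist_nature} and Remark~\ref{rem:gamma_dist_nature} for the individual factors), and a log-concave probability density on $\mathbb{R}$ that is not a point mass is unimodal. One should also note the minor technical caveat already raised in the remarks about the Gaussian approximation: the support is effectively $[0,\infty)$ up to negligible left-tail mass, and near $t=0$ the Gamma factors force $f_{T_1}(0^+)$ small with $f_{T_1}' > 0$, so the convex regime near the origin is non-vacuous. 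Assembling these observations gives the result without any heavy computation; the argument is essentially: CDF convexity $\Leftrightarrow$ monotone density, and a unimodal density is not monotone.
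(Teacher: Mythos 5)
Your proposal is correct and shares the paper's overall reduction---both note that $F_{T_1}''(t)=f_{T_1}'(t)$, so the question becomes whether $f_{T_1}'$ changes sign---but the key structural fact you invoke is genuinely different from the one in the paper. The paper tabulates the properties of the six constituent densities (unimodal, log-concave, mostly asymmetric) and then cites a result that the convolution of asymmetric unimodal functions is \emph{multi-modal}, concluding that $f_{T_1}'$ changes sign multiple times. You instead use the preservation of log-concavity under convolution to conclude that $f_{T_1}$ is log-concave, hence \emph{unimodal} with a single mode $t^\star$, giving exactly one sign change of $F_{T_1}''$ (convex before $t^\star$, concave after). Either a single sign change or several suffices for the theorem, so both arguments close the gap; however, your route is the more internally consistent one, since a log-concave density cannot be multi-modal, and the paper itself establishes log-concavity of every factor in Table~\ref{tab:nature_rv}. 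Your version also handles the one caveat worth making explicit---ruling out a monotone (e.g., exponential-like, globally concave CDF) density---by observing that the Gaussian factors give full support and the density vanishes in both tails, so the convex regime near the origin is non-vacuous. In short: same skeleton, different and arguably sounder key lemma.
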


\begin{proof}
See Appendix \ref{theorem_1}.
\end{proof}

\subsection{Case 2: Raw Data Offloading to MEC }
Here, no data compression happens at the robot, and the whole raw sensing data is transmitted to the BS, where MEC processes it in order to estimate the user readable command. Thus, referring to Fig. \ref{fig:sys_mod_MEC_TI}, the latency involved $T_2^c$ in transmitting the command signal from HO to the robot is given as
\begin{equation} 
\nonumber 
T_2^c = T_1^c.
\end{equation}

Now, referring to Fig. \ref{fig:sys_mod_MEC_TI}, the latency $T_2^f$ involved in transmitting the raw sensing data having $M_f$ packets from the robot to the HO is given as 
\begin{equation} 
\begin{aligned}\label{eq:clt_22}
T_2^f = {} & T_{\text{tx}}^f(M_f, \epsilon) + T_c^f(\kappa_{bs}, \beta_c, D_s, f_{bs})+ T_{\text{tx}}^{pf}(N_f^p, \epsilon),
\end{aligned}
\end{equation} 
where the details of the parameters are mentioned in \eqref{eq:clt_11}. 

Ignoring the latency elapsed in sending the low-level commands to robot and HO (see \eqref{eq:delay_cd_fb}), the closed-loop latency $T_2$ in this case is given as
\begin{equation} 
\begin{aligned}
T_2 = {} & T_2^c + T_2^f \\
= {} & T_{\text{tx}}^c(N_c, \epsilon) + T_c^c(\kappa_{bs}, \beta_c, D_c,f_{bs}) + T_{\text{tx}}^f(N_f, \epsilon) + T_c^f(\kappa_{bs}, \beta_c, D_s, f_{bs}). 
\end{aligned}
\end{equation} 
This can be rewritten as
\begin{equation} \label{eq:clt_alpha_1_2} 
\begin{aligned}
T_2 = {} & \big[ T_c^c(\kappa_{bs}, \beta_c, D_c,f_{bs}) + T_c^f(\kappa_{bs}, \beta_c, D_s, f_{bs}) \big] + \big[ T_{\text{tx}}^c(N_c, \epsilon) + T_{\text{tx}}^f(M_f, \epsilon) \big]. 
\end{aligned}
\end{equation}

The expected value of $T_2$, $\mu_{T_2}$, is given as
\begin{equation} \label{eq:mean_t_2}
\begin{aligned}
\begin{aligned}
\mu_{T_2} = {} & N_c \frac{1}{1-\epsilon}t_p + \frac{ D_c \kappa_{bs} \beta_c}{f_{bs}} + M_f \frac{1}{1-\epsilon}t_p + \frac{ D_s \kappa_{bs} \beta_{c}}{f_{bs}}.
\end{aligned}
\end{aligned}
\end{equation} 
As the constituents of $T_2$ are all independent RVs, we can make some of the same inferences that we proved for $T_1$.

%


\begin{remark} \label{rem:pdf_t_2}
The PDF of $T_{2}$ is neither a convex nor a concave function of time, because none of its constituent distributions are either convex or concave (see Lemma \ref{lemma_cvx}).
\end{remark}

\begin{theorem}
The CDF of $T_{2}$ is neither a convex nor a concave function of time. 
\end{theorem}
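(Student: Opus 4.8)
The plan is to mirror the argument used for Theorem~1 (Appendix~\ref{theorem_1}) and exploit the elementary relationship between a CDF and its PDF. Write $F_{T_2}(t)=\int_{-\infty}^{t} f_{T_2}(s)\,ds$, so that $F_{T_2}'(t)=f_{T_2}(t)\ge 0$ and $F_{T_2}''(t)=f_{T_2}'(t)$ wherever $f_{T_2}$ is differentiable. Consequently $F_{T_2}$ is convex on an interval exactly where $f_{T_2}$ is non-decreasing there, and concave exactly where $f_{T_2}$ is non-increasing there. The theorem therefore reduces to showing that $f_{T_2}$ is neither globally non-decreasing nor globally non-increasing on its domain.

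First I would recall, from \eqref{eq:clt_alpha_1_2}, that $T_2$ is a sum of four independent RVs: the Gaussian transmission delays $T_{\text{tx}}^c$ and $T_{\text{tx}}^f$ (log-concave by Remark~\ref{rem:gaussian_dist_nature}) and the Gamma-distributed processing delays $T_c^c$ and $T_c^f$ (log-concave by Remark~\ref{rem:gamma_dist_nature}). Since log-concavity is preserved under convolution, $f_{T_2}$ is itself log-concave, hence unimodal: it is non-decreasing up to a mode $t^\star$ and non-increasing thereafter. Next I would argue that $t^\star$ is interior, so that $f_{T_2}$ genuinely increases on a non-trivial interval and genuinely decreases on another. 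The cleanest way is to note that any non-negative integrable function on $\mathbb{R}$ that is monotone must be identically zero (a non-decreasing density would force an infinite right tail, a non-increasing one an infinite left tail); since $f_{T_2}$ integrates to one, it can be neither. Equivalently, convolving with $T_{\text{tx}}^c+T_{\text{tx}}^f\sim\mathcal{N}(\cdot,\cdot)$, whose mean is bounded well away from the left edge of the support, pushes the mode of $f_{T_2}$ strictly into the interior. Either route shows $f_{T_2}$ is neither non-decreasing nor non-increasing over its whole domain.

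Combining, $F_{T_2}''=f_{T_2}'$ is non-negative on $(-\infty,t^\star)$ and non-positive on $(t^\star,\infty)$ with strict sign on non-degenerate subintervals, so $F_{T_2}$ is convex on the former and concave on the latter; hence it is neither convex nor concave globally. Remark~\ref{rem:pdf_t_2} and Lemma~\ref{lemma_cvx} are invoked only to keep the exposition parallel to Case~1 (confirming $f_{T_2}$ is itself neither convex nor concave), but are not strictly needed for this chain. I expect the main obstacle to be the interiority of the mode: one must rule out the degenerate possibility that $f_{T_2}$ is monotone on a half-line, and the ``integrable $\Rightarrow$ not monotone'' observation (or, concretely, the presence of the Gaussian transmission term with positive mean) is the key point that closes this gap.
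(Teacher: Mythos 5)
Your proposal is correct, but it reaches the conclusion by a different route than the paper. The paper's proof of this theorem simply defers to the proof of Theorem~1, which argues that the convolution of asymmetric unimodal densities is \emph{multi-modal}, so that $F_{T_2}''=f_{T_2}'$ changes sign several times. You instead use the preservation of log-concavity under convolution to conclude that $f_{T_2}$ is log-concave and hence \emph{unimodal}, and then argue that the mode is interior, so $F_{T_2}$ is convex up to the mode and concave afterwards. Your version is more self-consistent with the paper's own Table (every constituent of $T_2$ is listed as log-concave, so their convolution must be log-concave and therefore unimodal, not multi-modal), and it isolates the one genuinely delicate point: ruling out a monotone density. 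Your ``integrable $\Rightarrow$ not monotone'' argument is sound here only because $f_{T_2}$ has full support on $\mathbb{R}$ (the Gaussian transmission components see to that); for a density supported on a half-line a non-increasing profile is possible (e.g.\ exponential), in which case the CDF would be concave on its support, so it is worth stating explicitly that the Gaussian factor is what excludes this degenerate case. With that caveat made explicit, your argument is complete and arguably tighter than the one the paper actually gives.
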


\begin{proof}
See Appendix \ref{theorem_2}.
\end{proof}

\subsection{Optimization of Closed-loop Teleoperation System}

It is important to optimize the closed-loop latency $\tau_i$ ($i=1$ for Case 1 and $i=2$ for Case 2) of the MEC-enabled teleoperation system for a given bandwidth $B$. This also requires to estimate the optimal compression ratio locally at the robot itself before transmitting it to the BS. For this purpose, an optimization problem is formulated as follows 
\begin{equation}
\nonumber 
\begin{aligned}
 \textbf{(P1)}:~~ & \min_{Q, \epsilon} \quad \tau_i, \;\;\; i \in \{ 1,2 \} \\ 
 \text{s. t.} \;\; {\textbf{(C1)}}:~ & ~ F_{T_i}(\tau_i) \geq \varrho_\text{th}, \;\;\; i \in \{ 1,2 \} \\
 {\textbf{(C2)}}:~ & ~ 0 \leq \epsilon \leq \epsilon_{th} \\
 {\textbf{(C3)}}:~ & ~ Q \geq Q_{th} =1 
\end{aligned}
\label{eq:opt_prob_p1}
\end{equation}
Constraint {\textbf{(C1)}} ensures the closed-loop latency $\tau$ in probabilistic sense, where $F_{T}(\cdot)$ denotes the CDF of $T$.
 $\varrho_\text{th}$ is an statistical parameter, which indicates the probability of the closed-loop latency be at most $\tau$. Constraint {\textbf{(C2)}} restricts the range of link outage probability. Constraint {\textbf{(C3)}} indicates the range of compression ratio. 

 In order to solve the optimization problem {\textbf{(P1)}}, we need to obtain the distribution of $T_1$ and $T_2$ to meet the statistical guarantee on the closed-loop latency criterion in {\textbf{(C1)}}. 
However, as we noted above, these PDFs are hard to obtain, because the closed-form expression of the distribution of the sum of arbitrary Gamma RVs is not known. However, several approximation methods \cite{gamma_approx_1, gamma_approx_2, gamma_approx_3} to estimate the distribution of sum of Gamma RVs are reported in the literature. These methods offer better accuracy when the number of RVs to be summed are very high. Here, the closed-loop latency $T_1$ comprises of only four Gamma RVs (see \eqref{eq:clt_alpha_0_1}), whereas the closed-loop latency $T_2$ comprises of only two Gamma RVs (see \eqref{eq:clt_alpha_1_2}). The approximation error may then be high, which is unacceptable for the teleoperation system design as we are dealing with time-critical application scenario. Therefore, these approximation methods are not viable options in the given context. However, finding lower and upper bounds on the values of $T_1$ and $T_2$ will be helpful to solve the optimization problem \textbf{(P1)}.

\begin{theorem} \label{th:bound_case_1} 
 The closed-loop latency $T_1$ for a given $\varrho_{th}$ with $F_{T_1}(\tau_1) = \varrho_{th}$ is bounded as follows:
\begin{equation}
\nonumber 
 \tau_{1,L}(Q, \epsilon, \varrho_{th}) \leq \tau_1 \leq \tau_{1,U}(Q, \epsilon, \varrho_{th})
\end{equation}
where 
\begin{equation} 
\nonumber
\begin{aligned}
 \tau_{1,L}(Q, \epsilon, \varrho_{th}) ={} & \max\left( F_{T_{tx}^c}^{-1}(\varrho_{th}), F_{T_{tx}^f}^{-1}(\varrho_{th}), F_{T_{cp}^f}^{-1}(\varrho_{th}), F_{T_{c}^c}^{-1}(\varrho_{th}), F_{T_{d}^f}^{-1}(\varrho_{th}), F_{T_{c}^f}^{-1}(\varrho_{th}) \right); \\
 \tau_{1,U}(Q, \epsilon, \varrho_{th}) ={} & \min\bigg(F_{T_{tx}^c}^{-1}(\varrho_{th}) + F_{T_{tx}^f}^{-1}(\varrho_{th}) + F_{T_{cp}^f}^{-1}(\varrho_{th}) + F_{T_{c}^c}^{-1}(\varrho_{th}) + F_{T_{d}^f}^{-1}(\varrho_{th}) + F_{T_{c}^f}^{-1}(\varrho_{th}),\\
 &\frac{\mu_{T_1}}{1-\varrho_{th}} \bigg).
\end{aligned} 
\end{equation}
 $F_{\mathcal{Z}}^{-1}(\cdot)$ denotes the inverse of CDF of RV $\mathcal{Z}$.
\end{theorem}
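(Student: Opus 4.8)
The plan is to establish the two bounds separately, each resting on an elementary probabilistic fact about sums of independent nonnegative random variables. Write $T_1 = \sum_{j=1}^{6} Z_j$ where $Z_1 = T_{\text{tx}}^c$, $Z_2 = T_{\text{tx}}^f$, $Z_3 = T_{cp}^f$, $Z_4 = T_c^c$, $Z_5 = T_d^f$, $Z_6 = T_c^f$, all independent, and (up to the negligible Gaussian left tail discussed in the earlier remark) all nonnegative. Fix $\varrho_{th}$ and let $\tau_1$ solve $F_{T_1}(\tau_1) = \varrho_{th}$, i.e. $\Pr(T_1 \le \tau_1) = \varrho_{th}$.

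\textbf{Lower bound.} For each $j$, since all summands are nonnegative, the event $\{T_1 \le t\}$ implies $\{Z_j \le t\}$; hence $F_{T_1}(t) \le F_{Z_j}(t)$ for every $t$, so $\varrho_{th} = F_{T_1}(\tau_1) \le F_{Z_j}(\tau_1)$. Because each $F_{Z_j}$ is nondecreasing, applying $F_{Z_j}^{-1}$ (the generalized inverse) gives $\tau_1 \ge F_{Z_j}^{-1}(\varrho_{th})$ for each $j$, and taking the maximum over $j$ yields $\tau_1 \ge \tau_{1,L}(Q,\epsilon,\varrho_{th})$.

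\textbf{Upper bound.} Two competing upper estimates are combined via a minimum, so each must be proved on its own. \emph{First estimate (union-type bound).} Consider the candidate value $\tau^* = \sum_{j=1}^{6} F_{Z_j}^{-1}(\varrho_{th})$. I would show $\Pr(T_1 > \tau^*) \le \sum_{j=1}^{6}\Pr\!\big(Z_j > F_{Z_j}^{-1}(\varrho_{th})\big)$; this is not literally true for an arbitrary split of $\tau^*$, so the correct route is the contrapositive: if $Z_j \le F_{Z_j}^{-1}(\varrho_{th})$ for all $j$ simultaneously, then $T_1 \le \tau^*$, hence $\{T_1 > \tau^*\} \subseteq \bigcup_j \{Z_j > F_{Z_j}^{-1}(\varrho_{th})\}$, and a union bound gives $\Pr(T_1 > \tau^*) \le \sum_j (1-\varrho_{th}) = 6(1-\varrho_{th})$. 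This is too weak as stated, so instead I would allocate a common quantile level: pick $q$ with $q^6 \ge \varrho_{th}$ is the naive choice, but the cleaner argument is that independence gives $\Pr\big(\bigcap_j \{Z_j \le F_{Z_j}^{-1}(\varrho_{th})\}\big) = \prod_j F_{Z_j}(F_{Z_j}^{-1}(\varrho_{th})) \ge \varrho_{th}^6$, which is still not $\ge \varrho_{th}$. The honest fix — and I suspect what the authors intend — is that $\{T_1 \le \tau^*\} \supseteq \bigcap_j\{Z_j \le F_{Z_j}^{-1}(\varrho_{th})\}$ together with the observation that the bound $\tau_1 \le \tau^*$ need only hold because $F_{T_1}$ is monotone and $F_{T_1}(\tau^*) \ge F_{T_1}\big(\max_j F_{Z_j}^{-1}(\varrho_{th})\big) \ge \ldots$; more plausibly they use the trivial deterministic domination $T_1 = \sum_j Z_j$ and evaluate each $Z_j$ at its own $\varrho_{th}$-quantile and sum, invoking that the quantile of a sum is at most the sum of quantiles when — and this is the key lemma I would isolate and prove — the CDFs involved are \emph{log-concave} (which holds here: Gaussian and Gamma are both log-concave, per Remarks~\ref{rem:gaussian_dist_nature} and \ref{rem:gamma_dist_nature}), since log-concavity of each marginal implies log-concavity of the convolution, and for log-concave distributions the map $\varrho \mapsto F^{-1}(\varrho)$ is such that $F_{\sum_j Z_j}^{-1}(\varrho) \le \sum_j F_{Z_j}^{-1}(\varrho)$. \emph{Second estimate (Markov-type bound).} Since $T_1 \ge 0$, Markov's inequality gives $\Pr(T_1 > t) \le \mu_{T_1}/t$; setting the right side equal to $1-\varrho_{th}$ gives $t = \mu_{T_1}/(1-\varrho_{th})$, at which point $F_{T_1}(t) \ge \varrho_{th}$, so by monotonicity of $F_{T_1}$ and the definition of $\tau_1$ we get $\tau_1 \le \mu_{T_1}/(1-\varrho_{th})$. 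Taking the minimum of the two estimates gives $\tau_1 \le \tau_{1,U}(Q,\epsilon,\varrho_{th})$.

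\textbf{Main obstacle.} The Markov step and the lower bound are routine. The real work is the first half of the upper bound: justifying that the $\varrho_{th}$-quantile of the sum is dominated by the sum of the individual $\varrho_{th}$-quantiles. I expect to lean on log-concavity of all constituent densities (established in the earlier remarks) — log-concavity is preserved under convolution, and for a log-concave CDF the superadditivity-type inequality $F^{-1}_{X+Y}(\varrho) \le F^{-1}_X(\varrho) + F^{-1}_Y(\varrho)$ can be shown by comparing $\Pr(X+Y \le F^{-1}_X(\varrho)+F^{-1}_Y(\varrho))$ against the product of the marginals and using that the convolution of log-concave functions concentrates no worse than its factors; iterating over the six summands closes the argument. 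If that inequality turns out to need an extra hypothesis, the fallback is the weaker but always-valid bound obtained by replacing $\varrho_{th}$ with $1-(1-\varrho_{th})/6$ in each inverse CDF via the union bound above, which still yields a finite, computable upper bound of the same structural form.
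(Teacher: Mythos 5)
Your lower bound and your Markov estimate are correct and match the paper's proof in substance: the paper derives $z_0\ge x_0$ and $z_0\ge y_0$ by writing $F_{\mathcal{X}+\mathcal{Y}}(x_0)$ as a convolution integral and extracting a factor $K_2\in[0,1]$ via the mean value theorem, which is exactly your event inclusion $\{X+Y\le t\}\subseteq\{X\le t\}$ in integral form; the Markov step is the same (your phrasing of it, evaluating $F_{T_1}$ at $t=\mu_{T_1}/(1-\varrho_{th})$ and using monotonicity, is in fact cleaner than the paper's).

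The genuine gap is precisely where you suspected it: the first half of the upper bound, $F_{T_1}^{-1}(\varrho_{th})\le\sum_j F_{Z_j}^{-1}(\varrho_{th})$, is never established. You correctly discard the union bound and the independence product bound, but the lemma you then lean on --- $F_{X+Y}^{-1}(\varrho)\le F_X^{-1}(\varrho)+F_Y^{-1}(\varrho)$ for independent log-concave $X,Y$ --- is asserted rather than proved (``concentrates no worse than its factors'' is not an argument), and as stated it is false. Take $X,Y$ i.i.d.\ exponential with unit mean (a nonnegative, log-concave Gamma with shape $1$, i.e.\ exactly the kind of constituent appearing in $T_1$) and $\varrho=0.5$: the sum of quantiles is $2\ln 2\approx 1.386$, while $F_{X+Y}(1.386)=1-e^{-1.386}(1+1.386)\approx 0.40<0.5$, so $F_{X+Y}^{-1}(0.5)\approx 1.68$ exceeds the sum of the quantiles. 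The inequality only kicks in for $\varrho$ large enough (here roughly $\varrho\gtrsim 0.72$), so no proof can succeed without a hypothesis of the form $\varrho_{th}$ sufficiently close to $1$, which neither your proposal nor the theorem statement supplies; and your fallback of replacing $\varrho_{th}$ by $1-(1-\varrho_{th})/6$ proves a different bound from the one claimed. For the record, the paper does not use log-concavity here at all (that enters only in the convexity results for the CDFs): it applies the mean value theorem to $F_{\mathcal{X}+\mathcal{Y}}(x_0+y_0)=\int_{-y_0}^{x_0}f_{\mathcal{X}}(x_0-w)F_{\mathcal{Y}}(y_0+w)\,dw=K_1F_{\mathcal{X}}(x_0+y_0)$ with $K_1\in[0,1]$ and concludes $z_0\le x_0+y_0$ --- an argument that is itself delicate, since $K_1\le 1$ bounds $F_{\mathcal{X}+\mathcal{Y}}(x_0+y_0)$ from above when what is needed is a lower bound of $\varrho_{th}$. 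So your route for this half is genuinely different from the paper's, but it does not close the step, and the counterexample above shows it cannot be closed in the generality you (or the theorem) state it.
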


\begin{proof}
See Appendix \ref{theorem_4}.
\end{proof}

\begin{theorem} \label{th:bound_case_2} 
 The closed-loop latency $T_2$ for a given $\varrho_{th}$ with $F_{T_2}(\tau_2) = \varrho_{th}$ is bounded as follows:
\begin{equation}
\nonumber 
 \tau_{2,L}(\epsilon, \varrho_{th}) \leq \tau_2 \leq \tau_{2,U}(\epsilon, \varrho_{th})
\end{equation}
where 
\begin{equation}
\nonumber
\begin{split}
 \tau_{2,L}(\epsilon, \varrho_{th}) = {} & \mbox{max}\left( F_{T_{tx}^c}^{-1}(\varrho_{th}), F_{T_{tx}^f}^{-1}(\varrho_{th}), F_{T_{c}^c}^{-1}(\varrho_{th}), F_{T_{c}^f}^{-1}(\varrho_{th}) \right); \\
\tau_{2,U}(\epsilon, \varrho_{th}) = {} & \mbox{min}\left(F_{T_{tx}^c}^{-1}(\varrho_{th}) + F_{T_{tx}^f}^{-1}(\varrho_{th}) + F_{T_{c}^c}^{-1}(\varrho_{th}) + F_{T_{c}^f}^{-1}(\varrho_{th}), \frac{\mu_{T_2}}{1-\varrho_{th}} \right). 
\end{split}
\end{equation}
\end{theorem}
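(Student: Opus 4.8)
The plan is to reuse, essentially verbatim, the machinery behind Theorem~\ref{th:bound_case_1}, since $T_2$ has exactly the same structure as $T_1$ but with fewer summands. By~\eqref{eq:clt_alpha_1_2}, $T_2 = T_{\text{tx}}^c + T_c^c + T_{\text{tx}}^f + T_c^f$ is a sum of four mutually independent random variables; call them $Z_1,\dots,Z_4$. Two of them ($T_c^c,T_c^f$) are Gamma-distributed, hence supported on $[0,\infty)$, while the other two ($T_{\text{tx}}^c,T_{\text{tx}}^f$) are the Gaussian transmission times of~\eqref{eq:t_tx_dist}, whose probability mass on $(-\infty,0)$ is negligible (as noted in the remark following~\eqref{eq:t_tx_dist}); thus for all practical purposes every $Z_j\ge 0$. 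Put $s_j := F_{Z_j}^{-1}(\varrho_{th})$. Since $T_2$ has a density, $F_{T_2}$ is continuous and strictly increasing on the relevant range, so that $\tau_2\le c$ is equivalent to $F_{T_2}(c)\ge\varrho_{th}$ (and symmetrically $\tau_2\ge c$ to $F_{T_2}(c)\le\varrho_{th}$). I would verify the lower and the upper bound separately.

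For the lower bound, non-negativity gives $T_2\ge Z_j$ pointwise, hence $F_{T_2}(t)\le F_{Z_j}(t)$ for all $t$ and all $j$. Evaluating at $t=\tau_2$ yields $\varrho_{th}=F_{T_2}(\tau_2)\le F_{Z_j}(\tau_2)$, and applying the non-decreasing inverse CDF gives $\tau_2\ge s_j$ for each $j$; taking the maximum over $j\in\{1,2,3,4\}$ delivers $\tau_2\ge\tau_{2,L}$.

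The upper bound splits into its two candidate values. The Markov term is immediate: $T_2\ge 0$ with mean $\mu_{T_2}$ given by~\eqref{eq:mean_t_2}, so Markov's inequality gives $\Pr[T_2 > \mu_{T_2}/(1-\varrho_{th})]\le 1-\varrho_{th}$, i.e. $F_{T_2}(\mu_{T_2}/(1-\varrho_{th}))\ge\varrho_{th}$, hence $\tau_2\le\mu_{T_2}/(1-\varrho_{th})$. For the sum-of-quantiles term, the key structural fact is the set inclusion $\{T_2 > s_1+s_2+s_3+s_4\}\subseteq\bigcup_{j=1}^{4}\{Z_j>s_j\}$: if each summand lies below its own $\varrho_{th}$-quantile then so does the sum lie below $\sum_j s_j$. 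A union bound (which needs no independence) then controls $\Pr[T_2>\sum_j s_j]$ through the per-summand tail masses $1-F_{Z_j}(s_j)$, giving $F_{T_2}(\sum_j s_j)\ge\varrho_{th}$ and hence $\tau_2\le\sum_j s_j$. Since both $\sum_j s_j$ and $\mu_{T_2}/(1-\varrho_{th})$ upper-bound $\tau_2$, so does their minimum $\tau_{2,U}$.

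The step needing the most care is this last one: the union bound only yields $\Pr[T_2>\sum_j s_j]\le\sum_j(1-F_{Z_j}(s_j))$, so to close the argument exactly at reliability level $\varrho_{th}$ the quantile levels of the individual summands must be apportioned so that these tail masses sum to the permitted budget $1-\varrho_{th}$ — i.e. one works with $F_{Z_j}^{-1}(1-(1-\varrho_{th})/4)$, which for the near-unity reliability targets demanded by constraint~\textbf{(C1)} differs from $F_{Z_j}^{-1}(\varrho_{th})$ only negligibly. This is the same subtlety handled in the proof of Theorem~\ref{th:bound_case_1} in Appendix~\ref{theorem_4}, merely with six summands there instead of four here; the remaining ingredients (continuity and monotonicity of the CDFs, non-negativity of the $Z_j$, and the explicit mean in~\eqref{eq:mean_t_2}) are routine.
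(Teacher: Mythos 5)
Your lower bound and your Markov branch of the upper bound are correct and essentially coincide with what the paper does (your lower-bound argument, via $T_2 \geq Z_j$ pointwise and monotonicity of the CDF, is in fact more direct than the paper's computation of $F_{\mathcal{Z}}(x_0)$ through the convolution integral). The genuine gap is in the sum-of-quantiles branch. The inclusion $\{T_2 > \sum_j s_j\} \subseteq \bigcup_{j}\{Z_j > s_j\}$ is right, but the union bound it feeds gives only $\Pr[T_2 > \sum_j s_j] \leq 4(1-\varrho_{th})$, i.e. $F_{T_2}\bigl(\sum_j s_j\bigr) \geq 1-4(1-\varrho_{th})$, which is strictly weaker than the $F_{T_2}\bigl(\sum_j s_j\bigr) \geq \varrho_{th}$ you need in order to conclude $\tau_2 \leq \sum_j s_j$. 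You acknowledge this, but your repair --- replacing $F_{Z_j}^{-1}(\varrho_{th})$ by $F_{Z_j}^{-1}\bigl(1-(1-\varrho_{th})/4\bigr)$ --- proves a different (larger) upper bound than the one in the theorem statement, and declaring the difference ``negligible'' is a heuristic, not a proof; at $\varrho_{th}=0.95$ the per-summand tail budget shrinks from $0.05$ to $0.0125$, which shifts each quantile by a non-trivial amount. Moreover, quantiles of sums of independent RVs are not subadditive in general (the median of the sum of two i.i.d.\ exponentials exceeds the sum of their medians), so the stated inequality cannot follow from a distribution-free union bound; it needs an argument exploiting the structure of the summands or the level $\varrho_{th}$.

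You also misattribute your fix to the paper: Appendix~\ref{theorem_4} does not use a union bound or re-apportion the tail budget across summands. For a pair $\mathcal{Z}=\mathcal{X}+\mathcal{Y}$ it writes $F_{\mathcal{Z}}(x_0+y_0)$ as the convolution integral $\int_{-y_0}^{x_0} f_{\mathcal{X}}(x_0-w)F_{\mathcal{Y}}(y_0+w)\,dw$, extracts a factor $K_1\in[0,1]$ by the mean value theorem so that $F_{\mathcal{Z}}(x_0+y_0)=K_1 F_{\mathcal{X}}(x_0+y_0)$, and then iterates pairwise over the summands, appending the Markov term exactly as you do. So for the sum-of-quantiles part your route is genuinely different from the paper's, and as written it does not close.
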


\begin{proof}
See Appendix \ref{theorem_5}.
\end{proof} 

The closed-loop latency is the sum of independent RVs, and it doesn’t have the closed-form expression. However, it is well known that the PDF of the sum of RVs is obtained by convolving the constituent PDFs. In continuous domain, the convolution leads to integration which is difficult to compute since it requires the solution of a complicated multidimensional integral. Therefore, the convolution in discrete time domain is adopted here due to simplicity. Towards this, the constituent PDFs are discretized by sampling with same sampling interval (here $1$ millisecond is considered) in order to obtain the approximate probability mass functions (PMF). Then, the constituent PMFs are convolved to obtain the PMF of the closed-loop latency. Thus, we choose to use the numerical technique to solve the optimization problem {\textbf{(P1)}}.
However, the estimation of the optimal value of the latency along with link outage $\epsilon$ and compression ratio $Q$ requires the exhaustive search, which is computationally extensive. Therefore, it requires a computationally-efficient procedure to solve  {\textbf{(P1)}}, which can be achieved by reducing the search space of $\epsilon$ and $Q$. 
One can observe that, as the value of $\epsilon_{\text{th}}$ increases, the time elapsed in transmitting the data decreases (see Remark \ref{rem:rate_reliability}), and hence $\epsilon = \epsilon_{\text{th}}$ will be the optimal value. Now, the problem is to find the optimal value of the compression ratio $Q$, and we will use the obtained upper and lower bounds of the closed-loop latency for this purpose. 
Using the bounds on the closed-loop latency, following can be  written
\begin{equation}
 \tau_{i,L}^{\mbox{opt}} \leq \tau_{i}^{\mbox{opt}} \leq \tau_{i,U}^{\mbox{opt}},  \;\;\; i \in \{ 1,2 \}
\end{equation}
where $\tau_{i,L}^{\mbox{opt}} = \underset{Q, \epsilon_{\text{th}}, \varrho_{th} }{\mathrm{argmin}} \; \tau_{i,L}$ and   
$\tau_{i,U}^{\mbox{opt}} = \underset{Q, \epsilon_{\text{th}}, \varrho_{th}}{\mathrm{argmin}} \; \tau_{i,U}$. 
Then, the reduced search interval of compression ratio  $Q_{i, \mbox{SI}}$   can be obtained as follows
\begin{equation}
 Q_{i, \mbox{SI}} = \big\{Q \;| \; \tau_{i,L} \leq  \tau_{i,U}^{\text{opt}} \big\}.
\end{equation} 
The $Q_{i, \mbox{SI}}$ is divided into linearly spaced equidistant points (here at an interval of $0.01$), and then the PMF of the closed-loop latency is estimated for each of the values by   convolving constituent PMFs. Finally, the optimal value of $Q$ is the one which offers the minimum closed-loop latency by satisfying the constraint {\textbf{(C1)}} of the optimization problem {\textbf{(P1)}}. 
Thus,  the knowledge of $Q_{i, \text{SI}}$  leads to reduce the search space for compression ratio, and speeds up the procedure to solve the optimization problem {\textbf{(P1)}}.

\section{Simulation Results}\label{sec:results}

We illustrate the analysis presented in the previous sections through numerical evaluations. The values of the parameters considered are: $\ell = 2$, $D_c=0.15$ Mb, $D_s=0.5$ Mb,  $f_{bs}=15$ GHz, $\kappa_{bs}=1.25$, $\kappa_{r}=1.5$, $\zeta=0.1$, $\Psi=3.5$, $B=10$ MHz, $T_0=0.5 \mu s$, $K_0=-27$ dB, $N_0=-110$ dB, $d_{r-bs} = d_{bs-ho} =2$ km, $P_{\text{tx}}^{ho}=P_{\text{tx}}^{r}=0.5$ W. The computational capabilities of the MEC-enabled BS are consistent with the NVIDIA Jetson TX1, a common embedded processor for edge computing applications~\cite{halawa2017nvidia}.

 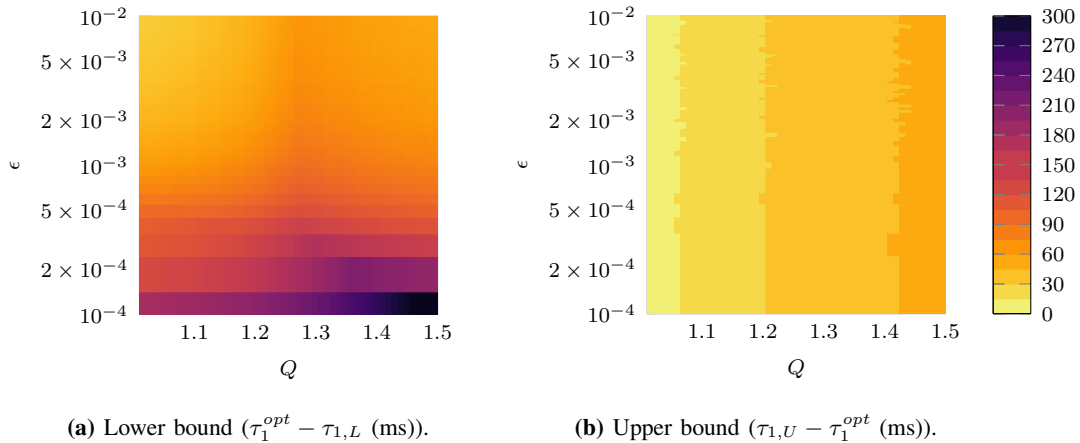
\begin{figure}[t!]
    \centering
       \begin{subfigure}[b]{.4\linewidth}
	    \flushleft
        \begin{tikzpicture} 
    \begin{axis}[
    width=\sfheight,
    height=\sfheight,
    name=lin,
    scale only axis,
    xlabel=$Q$,
    ylabel=$\epsilon$,
    mesh/cols=100,
    tick pos=left,
    mesh/rows=50,
    xmin=1.01,
    xmax=1.5,
    ymin=-4,
    ymax=-2,
    xtick={1.1,1.2,1.3,1.4,1.5},
    ytick={-4,-3.7,-3.3,-3,-2.7,-2.3,-2},
    yticklabels={$10^{-4}$,$2\times10^{-4}$,$5\times10^{-4}$,$10^{-3}$,$2\times10^{-3}$,$5\times10^{-3}$,$10^{-2}$},
    point meta min=0,
    point meta max=0.3,
colormap={mymap}{[1pt]
 rgb(0pt)=(0.946403,0.937159,0.458592);
  rgb(1pt)=(0.962517,0.851476,0.285546);
  rgb(2pt)=(0.981173,0.759135,0.156863);
  rgb(3pt)=(0.987945,0.667748,0.058329);
  rgb(4pt)=(0.981895,0.579392,0.02625);
  rgb(5pt)=(0.961293,0.488716,0.084289);
  rgb(6pt)=(0.929644,0.411479,0.145367);
  rgb(7pt)=(0.886302,0.342586,0.202968);
  rgb(8pt)=(0.832299,0.283913,0.257383);
  rgb(9pt)=(0.769556,0.236077,0.307485);
  rgb(10pt)=(0.694627,0.195021,0.354388);
  rgb(11pt)=(0.621685,0.164184,0.388781);
  rgb(12pt)=(0.547157,0.136929,0.413511);
  rgb(13pt)=(0.472328,0.110547,0.428334);
  rgb(14pt)=(0.397674,0.083257,0.433183);
  rgb(15pt)=(0.316282,0.05349,0.425116);
  rgb(16pt)=(0.238273,0.036621,0.396353);
  rgb(17pt)=(0.15585,0.044559,0.325338);
  rgb(18pt)=(0.081962,0.043328,0.215289);
  rgb(19pt)=(0.025793,0.019331,0.10593);
}
]
    \addplot[matrix plot*, point meta=explicit] file [meta=index 2] {./tikz_figs/lower_bound_dist_log.csv};
\end{axis}

\end{tikzpicture}
        \caption{Lower bound ($\tau_1^{opt} - \tau_{1,L}$ (ms)).}
        \label{fig:case1_lb_val}
    \end{subfigure}	
	\begin{subfigure}[b]{.4\linewidth}
	    \flushright
        \begin{tikzpicture} 
    \begin{axis}[
    width=\sfheight,
    height=\sfheight,
    scale only axis,
    name=lin,
    xlabel=$Q$,
    ylabel=$\epsilon$,
    mesh/cols=100,
    tick pos=left,
    mesh/rows=50,
    xmin=1.01,
    xmax=1.5,
    ymin=-4,
    ymax=-2,
    xtick={1.1,1.2,1.3,1.4,1.5},
    ytick={-4,-3.7,-3.3,-3,-2.7,-2.3,-2},
    yticklabels={$10^{-4}$,$2\times10^{-4}$,$5\times10^{-4}$,$10^{-3}$,$2\times10^{-3}$,$5\times10^{-3}$,$10^{-2}$},
    point meta min=0,
    point meta max=300,
colormap={mymap}{[1pt]
 rgb(0pt)=(0.946403,0.937159,0.458592);
  rgb(1pt)=(0.962517,0.851476,0.285546);
  rgb(2pt)=(0.981173,0.759135,0.156863);
  rgb(3pt)=(0.987945,0.667748,0.058329);
  rgb(4pt)=(0.981895,0.579392,0.02625);
  rgb(5pt)=(0.961293,0.488716,0.084289);
  rgb(6pt)=(0.929644,0.411479,0.145367);
  rgb(7pt)=(0.886302,0.342586,0.202968);
  rgb(8pt)=(0.832299,0.283913,0.257383);
  rgb(9pt)=(0.769556,0.236077,0.307485);
  rgb(10pt)=(0.694627,0.195021,0.354388);
  rgb(11pt)=(0.621685,0.164184,0.388781);
  rgb(12pt)=(0.547157,0.136929,0.413511);
  rgb(13pt)=(0.472328,0.110547,0.428334);
  rgb(14pt)=(0.397674,0.083257,0.433183);
  rgb(15pt)=(0.316282,0.05349,0.425116);
  rgb(16pt)=(0.238273,0.036621,0.396353);
  rgb(17pt)=(0.15585,0.044559,0.325338);
  rgb(18pt)=(0.081962,0.043328,0.215289);
  rgb(19pt)=(0.025793,0.019331,0.10593);
 },
    colorbar sampled,
    colormap access=piecewise constant,
    colorbar right,
    colorbar style={samples=21,
                    ylabel={},
                    ytick={0,30,...,300}}
]
    \addplot[matrix plot*, point meta=explicit] file [meta=index 2] {./tikz_figs/upper_bound_dist_log.csv};
\end{axis}

\end{tikzpicture}
\vspace{-0.725cm}
        \caption{Upper bound ($\tau_{1,U} - \tau_1^{opt}$ (ms)).}
        \label{fig:case1_ub_val}
    \end{subfigure}
\caption{ Validation of bounds on closed-loop latency for Case 1 with $f_R=1$ GHz,  $\rho_\text{th}=0.95$. }
\label{fig:bound_case_1} 
\end{figure}

 


\begin{figure}[t!]
    \centering
\begin{tikzpicture}

\begin{axis}[
width=\fwidth,
height=\fheight,
xmode=log,
legend cell align={left},
legend style={at={(0.995,0.99)}, anchor=north east, fill opacity=0.8, legend columns=1, draw opacity=1, text opacity=1, font=\tiny, draw=white!80!black},
tick align=outside,
tick pos=left,
x grid style={white!69.0196078431373!black},
ylabel={Closed-loop latency (ms)},
ymin=0, ymax=0.5,
ytick={0,0.1,0.2,0.3,0.4,0.5},
yticklabels={0,100,200,300,400,500},
xtick style={color=black},
y grid style={white!69.0196078431373!black},
xlabel={$\epsilon$},
xmin=1e-4, xmax=1e-2,
ytick style={color=black},
xmajorgrids,
ymajorgrids
]

\addplot[ultra thick, color6]
table{0.0001	0.451
0.0002	0.27
0.0003	0.21
0.0004	0.18
0.0005	0.162
0.0006	0.149
0.0007	0.141
0.0008	0.135
0.0009	0.13
0.001	0.126
0.0011	0.122
0.0012	0.119
0.0013	0.117
0.0014	0.115
0.0015	0.113
0.0016	0.112
0.0017	0.11
0.0018	0.109
0.0019	0.108
0.002	0.106
0.0021	0.106
0.0022	0.105
0.0023	0.104
0.0024	0.104
0.0025	0.103
0.0026	0.102
0.0027	0.102
0.0028	0.101
0.0029	0.101
0.003	0.1
0.0031	0.1
0.0032	0.1
0.0033	0.1
0.0034	0.099
0.0035	0.099
0.0036	0.098
0.0037	0.098
0.0038	0.097
0.0039	0.097
0.004	0.097
0.0041	0.097
0.0042	0.097
0.0043	0.096
0.0044	0.096
0.0045	0.096
0.0046	0.096
0.0047	0.096
0.0048	0.096
0.0049	0.095
0.005	0.095
0.0051	0.095
0.0052	0.095
0.0053	0.095
0.0054	0.095
0.0055	0.095
0.0056	0.094
0.0057	0.093
0.0058	0.093
0.0059	0.093
0.006	0.093
0.0061	0.093
0.0062	0.093
0.0063	0.093
0.0064	0.093
0.0065	0.093
0.0066	0.093
0.0067	0.093
0.0068	0.093
0.0069	0.092
0.007	0.092
0.0071	0.092
0.0072	0.092
0.0073	0.092
0.0074	0.092
0.0075	0.092
0.0076	0.092
0.0077	0.092
0.0078	0.092
0.0079	0.092
0.008	0.092
0.0081	0.092
0.0082	0.092
0.0083	0.092
0.0084	0.092
0.0085	0.091
0.0086	0.091
0.0087	0.091
0.0088	0.091
0.0089	0.091
0.009	0.091
0.0091	0.091
0.0092	0.091
0.0093	0.091
0.0094	0.091
0.0095	0.091
0.0096	0.091
0.0097	0.091
0.0098	0.091
0.0099	0.091
0.01	0.091
};
\addlegendentry{Optimal}

\addplot [ultra thick, color2,dashed]
table {%
0.0001	0.28592
0.0002	0.14724
0.0003	0.10096
0.0004	0.077778
0.0005	0.069281
0.0006	0.069281
0.0007	0.069281
0.0008	0.069281
0.0009	0.069281
0.001	0.069281
0.0011	0.069281
0.0012	0.069281
0.0013	0.069281
0.0014	0.069281
0.0015	0.069281
0.0016	0.069281
0.0017	0.069281
0.0018	0.069281
0.0019	0.069281
0.002	0.069281
0.0021	0.069281
0.0022	0.069281
0.0023	0.069281
0.0024	0.069281
0.0025	0.069281
0.0026	0.069281
0.0027	0.069281
0.0028	0.069281
0.0029	0.069281
0.003	0.069281
0.0031	0.069281
0.0032	0.069281
0.0033	0.069281
0.0034	0.069281
0.0035	0.069281
0.0036	0.069281
0.0037	0.069281
0.0038	0.069281
0.0039	0.069281
0.004	0.069281
0.0041	0.069281
0.0042	0.069281
0.0043	0.069281
0.0044	0.069281
0.0045	0.069281
0.0046	0.069281
0.0047	0.069281
0.0048	0.069281
0.0049	0.069281
0.005	0.069281
0.0051	0.069281
0.0052	0.069281
0.0053	0.069281
0.0054	0.069281
0.0055	0.069281
0.0056	0.069281
0.0057	0.069281
0.0058	0.069281
0.0059	0.069281
0.006	0.069281
0.0061	0.069281
0.0062	0.069281
0.0063	0.069281
0.0064	0.069281
0.0065	0.069281
0.0066	0.069281
0.0067	0.069281
0.0068	0.069281
0.0069	0.069281
0.007	0.069281
0.0071	0.069281
0.0072	0.069281
0.0073	0.069281
0.0074	0.069281
0.0075	0.069281
0.0076	0.069281
0.0077	0.069281
0.0078	0.069281
0.0079	0.069281
0.008	0.069281
0.0081	0.069281
0.0082	0.069281
0.0083	0.069281
0.0084	0.069281
0.0085	0.069281
0.0086	0.069281
0.0087	0.069281
0.0088	0.069281
0.0089	0.069281
0.009	0.069281
0.0091	0.069281
0.0092	0.069281
0.0093	0.069281
0.0094	0.069281
0.0095	0.069281
0.0096	0.069281
0.0097	0.069281
0.0098	0.069281
0.0099	0.069281
0.01	0.069281
};
\addlegendentry{Lower bound}

\addplot [ultra  thick, color4, dashdotted]
table {%
0.0001	0.46191
0.0002	0.28157
0.0003	0.22138
0.0004	0.19124
0.0005	0.17312
0.0006	0.16101
0.0007	0.15234
0.0008	0.14582
0.0009	0.14073
0.001	0.13665
0.0011	0.1333
0.0012	0.13051
0.0013	0.12813
0.0014	0.12608
0.0015	0.12431
0.0016	0.12275
0.0017	0.12136
0.0018	0.12013
0.0019	0.11902
0.002	0.11802
0.0021	0.11712
0.0022	0.11629
0.0023	0.11553
0.0024	0.11483
0.0025	0.11419
0.0026	0.11359
0.0027	0.11303
0.0028	0.11252
0.0029	0.11203
0.003	0.11158
0.0031	0.11115
0.0032	0.11075
0.0033	0.11038
0.0034	0.11002
0.0035	0.10968
0.0036	0.10936
0.0037	0.10906
0.0038	0.10877
0.0039	0.1085
0.004	0.10823
0.0041	0.10798
0.0042	0.10775
0.0043	0.10752
0.0044	0.1073
0.0045	0.10709
0.0046	0.10689
0.0047	0.10669
0.0048	0.10651
0.0049	0.10633
0.005	0.10615
0.0051	0.10599
0.0052	0.10583
0.0053	0.10567
0.0054	0.10552
0.0055	0.10538
0.0056	0.10524
0.0057	0.1051
0.0058	0.10497
0.0059	0.10485
0.006	0.10472
0.0061	0.1046
0.0062	0.10449
0.0063	0.10437
0.0064	0.10426
0.0065	0.10416
0.0066	0.10405
0.0067	0.10395
0.0068	0.10385
0.0069	0.10376
0.007	0.10367
0.0071	0.10358
0.0072	0.10349
0.0073	0.1034
0.0074	0.10332
0.0075	0.10323
0.0076	0.10315
0.0077	0.10307
0.0078	0.103
0.0079	0.10292
0.008	0.10285
0.0081	0.10278
0.0082	0.10271
0.0083	0.10264
0.0084	0.10257
0.0085	0.10251
0.0086	0.10244
0.0087	0.10238
0.0088	0.10232
0.0089	0.10226
0.009	0.1022
0.0091	0.10214
0.0092	0.10208
0.0093	0.10203
0.0094	0.10197
0.0095	0.10192
0.0096	0.10187
0.0097	0.10181
0.0098	0.10176
0.0099	0.10171
0.01	0.10166
};
\addlegendentry{Upper bound}

\end{axis}

\end{tikzpicture}
    \caption{Validation of bounds on closed-loop latency for Case 2 with   $\rho_\text{th}=0.95$.}
    \label{fig:bound_case_2}
\end{figure}
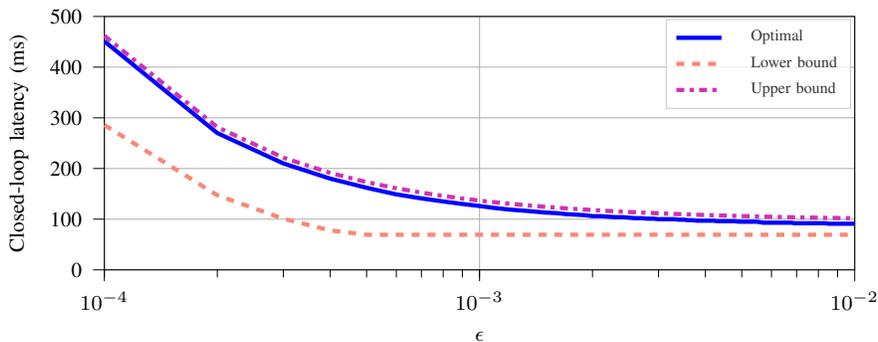

\subsection{Validation of Bounds}
The validation of the bounds on the closed-loop latency obtained for Case 1 in Theorem \ref{th:bound_case_1} is shown in Fig. \ref{fig:bound_case_1}.
Fig. \ref{fig:case1_lb_val} validates the lower bound as the difference between the optimal closed-loop latency and its lower bound, i.e.,  $\tau_1^{opt} - \tau_{1,L}$,  is always positive. On the other hand, the difference between the upper bound on the closed-loop latency and its optimal value,  i.e., $\tau_{1,U}  -  \tau_1^{opt}$, is always positive as shown in Fig.  \ref{fig:case1_ub_val} {{\footnote{ We note that same behavior is observed for any arbitrary range of $Q$ and $\epsilon$. }}}.  The bounds on the closed-loop latency obtained for Case 2 in Theorem \ref{th:bound_case_2} is shown in Fig. \ref{fig:bound_case_2}, which indicates that the  optimal value of closed-loop latency lies well within the bounds. Here, the compression ratio is $1$, because sensing data is not compressed at the robot located in the remote environment. From Fig. \ref{fig:bound_case_1} and Fig. \ref{fig:bound_case_2}, it can be noted that the optimal closed-loop latency in both cases is in the proximity of its upper bound.

\begin{figure}[t!]
    \centering
       \begin{subfigure}[b]{.45\linewidth}
	    \centering
\begin{tikzpicture}

\begin{axis}[
width=\sfwidth,
height=\sfheight,
legend cell align={left},
legend style={at={(0.9,0.2)}, anchor=south east, fill opacity=0.8, legend columns=1, draw opacity=1, text opacity=1, font=\tiny, draw=white!80!black},
tick align=outside,
tick pos=left,
x grid style={white!69.0196078431373!black},
xlabel={Latency (ms)},
xtick style={color=black},
ymin=0, ymax=1,
y grid style={white!69.0196078431373!black},
ylabel={CDF},
xtick={0,0.1,0.2,0.3},
xticklabels={0,100,200,300},
xmin=0, xmax=0.3,
ytick style={color=black},
xmajorgrids,
ymajorgrids
]
\addplot [ultra thick, color2, dashed]
table {%
0	0
0.001	0
0.002	0
0.003	0
0.004	0
0.005	0
0.006	0
0.007	9.7783e-288
0.008	6.832e-57
0.009	1
0.01	1
0.011	1
0.012	1
0.013	1
0.014	1
0.015	1
0.016	1
0.017	1
0.018	1
0.019	1
0.02	1
0.021	1
0.022	1
0.023	1
0.024	1
0.025	1
0.026	1
0.027	1
0.028	1
0.029	1
0.03	1
0.031	1
0.032	1
0.033	1
0.034	1
0.035	1
0.036	1
0.037	1
0.038	1
0.039	1
0.04	1
0.041	1
0.042	1
0.043	1
0.044	1
0.045	1
0.046	1
0.047	1
0.048	1
0.049	1
0.05	1
0.051	1
0.052	1
0.053	1
0.054	1
0.055	1
0.056	1
0.057	1
0.058	1
0.059	1
0.06	1
0.061	1
0.062	1
0.063	1
0.064	1
0.065	1
0.066	1
0.067	1
0.068	1
0.069	1
0.07	1
0.071	1
0.072	1
0.073	1
0.074	1
0.075	1
0.076	1
0.077	1
0.078	1
0.079	1
0.08	1
0.081	1
0.082	1
0.083	1
0.084	1
0.085	1
0.086	1
0.087	1
0.088	1
0.089	1
0.09	1
0.091	1
0.092	1
0.093	1
0.094	1
0.095	1
0.096	1
0.097	1
0.098	1
0.099	1
0.1	1
0.101	1
0.102	1
0.103	1
0.104	1
0.105	1
0.106	1
0.107	1
0.108	1
0.109	1
0.11	1
0.111	1
0.112	1
0.113	1
0.114	1
0.115	1
0.116	1
0.117	1
0.118	1
0.119	1
0.12	1
0.121	1
0.122	1
0.123	1
0.124	1
0.125	1
0.126	1
0.127	1
0.128	1
0.129	1
0.13	1
0.131	1
0.132	1
0.133	1
0.134	1
0.135	1
0.136	1
0.137	1
0.138	1
0.139	1
0.14	1
0.141	1
0.142	1
0.143	1
0.144	1
0.145	1
0.146	1
0.147	1
0.148	1
0.149	1
0.15	1
0.151	1
0.152	1
0.153	1
0.154	1
0.155	1
0.156	1
0.157	1
0.158	1
0.159	1
0.16	1
0.161	1
0.162	1
0.163	1
0.164	1
0.165	1
0.166	1
0.167	1
0.168	1
0.169	1
0.17	1
0.171	1
0.172	1
0.173	1
0.174	1
0.175	1
0.176	1
0.177	1
0.178	1
0.179	1
0.18	1
0.181	1
0.182	1
0.183	1
0.184	1
0.185	1
0.186	1
0.187	1
0.188	1
0.189	1
0.19	1
0.191	1
0.192	1
0.193	1
0.194	1
0.195	1
0.196	1
0.197	1
0.198	1
0.199	1
0.2	1
0.201	1
0.202	1
0.203	1
0.204	1
0.205	1
0.206	1
0.207	1
0.208	1
0.209	1
0.21	1
0.211	1
0.212	1
0.213	1
0.214	1
0.215	1
0.216	1
0.217	1
0.218	1
0.219	1
0.22	1
0.221	1
0.222	1
0.223	1
0.224	1
0.225	1
0.226	1
0.227	1
0.228	1
0.229	1
0.23	1
0.231	1
0.232	1
0.233	1
0.234	1
0.235	1
0.236	1
0.237	1
0.238	1
0.239	1
0.24	1
0.241	1
0.242	1
0.243	1
0.244	1
0.245	1
0.246	1
0.247	1
0.248	1
0.249	1
0.25	1
0.251	1
0.252	1
0.253	1
0.254	1
0.255	1
0.256	1
0.257	1
0.258	1
0.259	1
0.26	1
0.261	1
0.262	1
0.263	1
0.264	1
0.265	1
0.266	1
0.267	1
0.268	1
0.269	1
0.27	1
0.271	1
0.272	1
0.273	1
0.274	1
0.275	1
0.276	1
0.277	1
0.278	1
0.279	1
0.28	1
0.281	1
0.282	1
0.283	1
0.284	1
0.285	1
0.286	1
0.287	1
0.288	1
0.289	1
0.29	1
0.291	1
0.292	1
0.293	1
0.294	1
0.295	1
0.296	1
0.297	1
0.298	1
0.299	1
0.3	1
};
\addlegendentry{$\epsilon=10^{-2}$}

\addplot [ultra thick, color3,dashdotted]
table {%
0	0
0.001	0
0.002	0
0.003	0
0.004	0
0.005	0
0.006	0
0.007	0
0.008	0
0.009	0
0.01	0
0.011	0
0.012	0
0.013	0
0.014	0
0.015	0
0.016	0
0.017	0
0.018	0
0.019	0
0.02	0
0.021	0
0.022	0
0.023	0
0.024	0
0.025	0
0.026	0
0.027	0
0.028	0
0.029	0
0.03	0
0.031	0
0.032	0
0.033	0
0.034	9.3487e-151
0.035	2.0133e-22
0.036	1
0.037	1
0.038	1
0.039	1
0.04	1
0.041	1
0.042	1
0.043	1
0.044	1
0.045	1
0.046	1
0.047	1
0.048	1
0.049	1
0.05	1
0.051	1
0.052	1
0.053	1
0.054	1
0.055	1
0.056	1
0.057	1
0.058	1
0.059	1
0.06	1
0.061	1
0.062	1
0.063	1
0.064	1
0.065	1
0.066	1
0.067	1
0.068	1
0.069	1
0.07	1
0.071	1
0.072	1
0.073	1
0.074	1
0.075	1
0.076	1
0.077	1
0.078	1
0.079	1
0.08	1
0.081	1
0.082	1
0.083	1
0.084	1
0.085	1
0.086	1
0.087	1
0.088	1
0.089	1
0.09	1
0.091	1
0.092	1
0.093	1
0.094	1
0.095	1
0.096	1
0.097	1
0.098	1
0.099	1
0.1	1
0.101	1
0.102	1
0.103	1
0.104	1
0.105	1
0.106	1
0.107	1
0.108	1
0.109	1
0.11	1
0.111	1
0.112	1
0.113	1
0.114	1
0.115	1
0.116	1
0.117	1
0.118	1
0.119	1
0.12	1
0.121	1
0.122	1
0.123	1
0.124	1
0.125	1
0.126	1
0.127	1
0.128	1
0.129	1
0.13	1
0.131	1
0.132	1
0.133	1
0.134	1
0.135	1
0.136	1
0.137	1
0.138	1
0.139	1
0.14	1
0.141	1
0.142	1
0.143	1
0.144	1
0.145	1
0.146	1
0.147	1
0.148	1
0.149	1
0.15	1
0.151	1
0.152	1
0.153	1
0.154	1
0.155	1
0.156	1
0.157	1
0.158	1
0.159	1
0.16	1
0.161	1
0.162	1
0.163	1
0.164	1
0.165	1
0.166	1
0.167	1
0.168	1
0.169	1
0.17	1
0.171	1
0.172	1
0.173	1
0.174	1
0.175	1
0.176	1
0.177	1
0.178	1
0.179	1
0.18	1
0.181	1
0.182	1
0.183	1
0.184	1
0.185	1
0.186	1
0.187	1
0.188	1
0.189	1
0.19	1
0.191	1
0.192	1
0.193	1
0.194	1
0.195	1
0.196	1
0.197	1
0.198	1
0.199	1
0.2	1
0.201	1
0.202	1
0.203	1
0.204	1
0.205	1
0.206	1
0.207	1
0.208	1
0.209	1
0.21	1
0.211	1
0.212	1
0.213	1
0.214	1
0.215	1
0.216	1
0.217	1
0.218	1
0.219	1
0.22	1
0.221	1
0.222	1
0.223	1
0.224	1
0.225	1
0.226	1
0.227	1
0.228	1
0.229	1
0.23	1
0.231	1
0.232	1
0.233	1
0.234	1
0.235	1
0.236	1
0.237	1
0.238	1
0.239	1
0.24	1
0.241	1
0.242	1
0.243	1
0.244	1
0.245	1
0.246	1
0.247	1
0.248	1
0.249	1
0.25	1
0.251	1
0.252	1
0.253	1
0.254	1
0.255	1
0.256	1
0.257	1
0.258	1
0.259	1
0.26	1
0.261	1
0.262	1
0.263	1
0.264	1
0.265	1
0.266	1
0.267	1
0.268	1
0.269	1
0.27	1
0.271	1
0.272	1
0.273	1
0.274	1
0.275	1
0.276	1
0.277	1
0.278	1
0.279	1
0.28	1
0.281	1
0.282	1
0.283	1
0.284	1
0.285	1
0.286	1
0.287	1
0.288	1
0.289	1
0.29	1
0.291	1
0.292	1
0.293	1
0.294	1
0.295	1
0.296	1
0.297	1
0.298	1
0.299	1
0.3	1
};
\addlegendentry{$\epsilon=10^{-3}$}

\addplot [ultra thick, color5,densely dotted]
table {%
0	0
0.001	0
0.002	0
0.003	0
0.004	0
0.005	0
0.006	0
0.007	0
0.008	0
0.009	0
0.01	0
0.011	0
0.012	0
0.013	0
0.014	0
0.015	0
0.016	0
0.017	0
0.018	0
0.019	0
0.02	0
0.021	0
0.022	0
0.023	0
0.024	0
0.025	0
0.026	0
0.027	0
0.028	0
0.029	0
0.03	0
0.031	0
0.032	0
0.033	0
0.034	0
0.035	0
0.036	0
0.037	0
0.038	0
0.039	0
0.04	0
0.041	0
0.042	0
0.043	0
0.044	0
0.045	0
0.046	0
0.047	0
0.048	0
0.049	0
0.05	0
0.051	0
0.052	0
0.053	0
0.054	0
0.055	0
0.056	0
0.057	0
0.058	0
0.059	0
0.06	0
0.061	0
0.062	0
0.063	0
0.064	0
0.065	0
0.066	0
0.067	0
0.068	0
0.069	0
0.07	0
0.071	0
0.072	0
0.073	0
0.074	0
0.075	0
0.076	0
0.077	0
0.078	0
0.079	0
0.08	0
0.081	0
0.082	0
0.083	0
0.084	0
0.085	0
0.086	0
0.087	0
0.088	0
0.089	0
0.09	0
0.091	0
0.092	0
0.093	0
0.094	0
0.095	0
0.096	0
0.097	0
0.098	0
0.099	0
0.1	0
0.101	0
0.102	0
0.103	0
0.104	0
0.105	0
0.106	0
0.107	0
0.108	0
0.109	0
0.11	0
0.111	0
0.112	0
0.113	0
0.114	0
0.115	0
0.116	0
0.117	0
0.118	0
0.119	0
0.12	0
0.121	0
0.122	0
0.123	0
0.124	0
0.125	0
0.126	0
0.127	0
0.128	0
0.129	0
0.13	0
0.131	0
0.132	0
0.133	0
0.134	0
0.135	0
0.136	0
0.137	0
0.138	0
0.139	0
0.14	0
0.141	0
0.142	0
0.143	0
0.144	0
0.145	0
0.146	0
0.147	0
0.148	0
0.149	0
0.15	0
0.151	0
0.152	0
0.153	0
0.154	0
0.155	0
0.156	0
0.157	0
0.158	0
0.159	0
0.16	0
0.161	0
0.162	0
0.163	0
0.164	0
0.165	0
0.166	0
0.167	0
0.168	0
0.169	0
0.17	0
0.171	0
0.172	0
0.173	0
0.174	0
0.175	0
0.176	0
0.177	0
0.178	0
0.179	0
0.18	0
0.181	0
0.182	0
0.183	0
0.184	0
0.185	0
0.186	0
0.187	0
0.188	0
0.189	0
0.19	0
0.191	0
0.192	0
0.193	0
0.194	0
0.195	0
0.196	0
0.197	0
0.198	0
0.199	0
0.2	0
0.201	0
0.202	0
0.203	0
0.204	0
0.205	0
0.206	0
0.207	0
0.208	0
0.209	0
0.21	0
0.211	0
0.212	0
0.213	0
0.214	0
0.215	0
0.216	0
0.217	0
0.218	0
0.219	0
0.22	0
0.221	0
0.222	0
0.223	0
0.224	0
0.225	0
0.226	0
0.227	0
0.228	0
0.229	0
0.23	0
0.231	0
0.232	0
0.233	0
0.234	0
0.235	0
0.236	0
0.237	0
0.238	0
0.239	0
0.24	0
0.241	0
0.242	0
0.243	0
0.244	0
0.245	0
0.246	0
0.247	0
0.248	0
0.249	0
0.25	0
0.251	0
0.252	0
0.253	0
0.254	0
0.255	0
0.256	0
0.257	0
0.258	0
0.259	0
0.26	0
0.261	0
0.262	0
0.263	0
0.264	0
0.265	0
0.266	0
0.267	0
0.268	0
0.269	0
0.27	0
0.271	0
0.272	0
0.273	0
0.274	0
0.275	0
0.276	0
0.277	0
0.278	0
0.279	0
0.28	1.3959e-265
0.281	8.6261e-180
0.282	1.1677e-110
0.283	3.4621e-58
0.284	2.2485e-22
0.285	0.0031988
0.286	1
0.287	1
0.288	1
0.289	1
0.29	1
0.291	1
0.292	1
0.293	1
0.294	1
0.295	1
0.296	1
0.297	1
0.298	1
0.299	1
0.3	1
};
\addlegendentry{$\epsilon=10^{-4}$}

\end{axis}

\end{tikzpicture}

\vspace{-0.3cm}
        \caption{Transmission latency.}
        \label{fig:t_opt_lat}
    \end{subfigure}	
	\begin{subfigure}[b]{.45\linewidth}
	    \centering
\begin{tikzpicture}

\begin{axis}[
width=\sfwidth,
height=\sfheight,
legend cell align={left},
legend style={at={(0.995,0.01)}, anchor=south east, fill opacity=0.8, legend columns=1, draw opacity=1, text opacity=1, font=\tiny, draw=white!80!black},
tick align=outside,
tick pos=left,
x grid style={white!69.0196078431373!black},
xlabel={Latency (ms)},
xtick style={color=black},
ymin=0, ymax=1,
y grid style={white!69.0196078431373!black},
ylabel={CDF},
xmin=0, xmax=0.1,
xtick={0,0,0.02,0.04,0.06,0.08,0.1},
xticklabels={0,0,20,40,60,80,100},
ytick style={color=black},
xmajorgrids,
ymajorgrids
]

\addplot [ultra thick, color2, dashed]
table {%
0	0
0.001	0.55422
0.002	0.82016
0.003	0.93067
0.004	0.97397
0.005	0.99039
0.006	0.9965
0.007	0.99873
0.008	0.99955
0.009	0.99984
0.01	0.99994
0.011	0.99998
0.012	0.99999
0.013	1
0.014	1
0.015	1
0.016	1
0.017	1
0.018	1
0.019	1
0.02	1
0.021	1
0.022	1
0.023	1
0.024	1
0.025	1
0.026	1
0.027	1
0.028	1
0.029	1
0.03	1
0.031	1
0.032	1
0.033	1
0.034	1
0.035	1
0.036	1
0.037	1
0.038	1
0.039	1
0.04	1
0.041	1
0.042	1
0.043	1
0.044	1
0.045	1
0.046	1
0.047	1
0.048	1
0.049	1
0.05	1
0.051	1
0.052	1
0.053	1
0.054	1
0.055	1
0.056	1
0.057	1
0.058	1
0.059	1
0.06	1
0.061	1
0.062	1
0.063	1
0.064	1
0.065	1
0.066	1
0.067	1
0.068	1
0.069	1
0.07	1
0.071	1
0.072	1
0.073	1
0.074	1
0.075	1
0.076	1
0.077	1
0.078	1
0.079	1
0.08	1
0.081	1
0.082	1
0.083	1
0.084	1
0.085	1
0.086	1
0.087	1
0.088	1
0.089	1
0.09	1
0.091	1
0.092	1
0.093	1
0.094	1
0.095	1
0.096	1
0.097	1
0.098	1
0.099	1
0.1	1
};
\addlegendentry{$Q=1.1$}

\addplot [ultra thick, color3,dashdotted]
table {%
0	0
0.001	0.10946
0.002	0.23076
0.003	0.34718
0.004	0.45252
0.005	0.54481
0.006	0.62403
0.007	0.69109
0.008	0.74726
0.009	0.79395
0.01	0.83251
0.011	0.86421
0.012	0.89015
0.013	0.91131
0.014	0.92851
0.015	0.94247
0.016	0.95376
0.017	0.96289
0.018	0.97024
0.019	0.97616
0.02	0.98093
0.021	0.98475
0.022	0.98782
0.023	0.99027
0.024	0.99224
0.025	0.99381
0.026	0.99507
0.027	0.99607
0.028	0.99687
0.029	0.99751
0.03	0.99802
0.031	0.99843
0.032	0.99875
0.033	0.99901
0.034	0.99921
0.035	0.99937
0.036	0.9995
0.037	0.99961
0.038	0.99969
0.039	0.99975
0.04	0.9998
0.041	0.99984
0.042	0.99988
0.043	0.9999
0.044	0.99992
0.045	0.99994
0.046	0.99995
0.047	0.99996
0.048	0.99997
0.049	0.99998
0.05	0.99998
0.051	0.99999
0.052	0.99999
0.053	0.99999
0.054	0.99999
0.055	0.99999
0.056	1
0.057	1
0.058	1
0.059	1
0.06	1
0.061	1
0.062	1
0.063	1
0.064	1
0.065	1
0.066	1
0.067	1
0.068	1
0.069	1
0.07	1
0.071	1
0.072	1
0.073	1
0.074	1
0.075	1
0.076	1
0.077	1
0.078	1
0.079	1
0.08	1
0.081	1
0.082	1
0.083	1
0.084	1
0.085	1
0.086	1
0.087	1
0.088	1
0.089	1
0.09	1
0.091	1
0.092	1
0.093	1
0.094	1
0.095	1
0.096	1
0.097	1
0.098	1
0.099	1
0.1	1
};
\addlegendentry{$Q=1.3$}

\addplot [ultra thick, color5,densely dotted]
table {%
0	0
0.001	0.03038
0.002	0.069439
0.003	0.11293
0.004	0.15858
0.005	0.20499
0.006	0.25121
0.007	0.29659
0.008	0.34069
0.009	0.38323
0.01	0.42398
0.011	0.46285
0.012	0.49975
0.013	0.53467
0.014	0.56761
0.015	0.59861
0.016	0.62772
0.017	0.655
0.018	0.68052
0.019	0.70436
0.02	0.72659
0.021	0.7473
0.022	0.76657
0.023	0.78449
0.024	0.80112
0.025	0.81656
0.026	0.83087
0.027	0.84413
0.028	0.85641
0.029	0.86777
0.03	0.87827
0.031	0.88797
0.032	0.89694
0.033	0.90522
0.034	0.91285
0.035	0.9199
0.036	0.92639
0.037	0.93238
0.038	0.9379
0.039	0.94298
0.04	0.94766
0.041	0.95196
0.042	0.95592
0.043	0.95957
0.044	0.96292
0.045	0.966
0.046	0.96883
0.047	0.97143
0.048	0.97382
0.049	0.97602
0.05	0.97804
0.051	0.97989
0.052	0.98159
0.053	0.98315
0.054	0.98458
0.055	0.9859
0.056	0.9871
0.057	0.98821
0.058	0.98922
0.059	0.99015
0.06	0.991
0.061	0.99178
0.062	0.9925
0.063	0.99316
0.064	0.99376
0.065	0.99431
0.066	0.99481
0.067	0.99528
0.068	0.9957
0.069	0.99609
0.07	0.99644
0.071	0.99677
0.072	0.99707
0.073	0.99734
0.074	0.99759
0.075	0.99782
0.076	0.99802
0.077	0.99822
0.078	0.99839
0.079	0.99855
0.08	0.9987
0.081	0.99883
0.082	0.99895
0.083	0.99907
0.084	0.99917
0.085	0.99926
0.086	0.99935
0.087	0.99943
0.088	0.9995
0.089	0.99956
0.09	0.99962
0.091	0.99968
0.092	0.99973
0.093	0.99977
0.094	0.99981
0.095	0.99985
0.096	0.99989
0.097	0.99992
0.098	0.99995
0.099	0.99998
0.1	1
};
\addlegendentry{$Q=1.5$}

\end{axis}

\end{tikzpicture}

\vspace{-0.3cm}
        \caption{Compression latency.}
        \label{fig:t_opt_compr}
    \end{subfigure}
\caption{CDF of latency incurred in data transmission and compression.}
    \label{fig:cdf_t_tx} 
\end{figure}
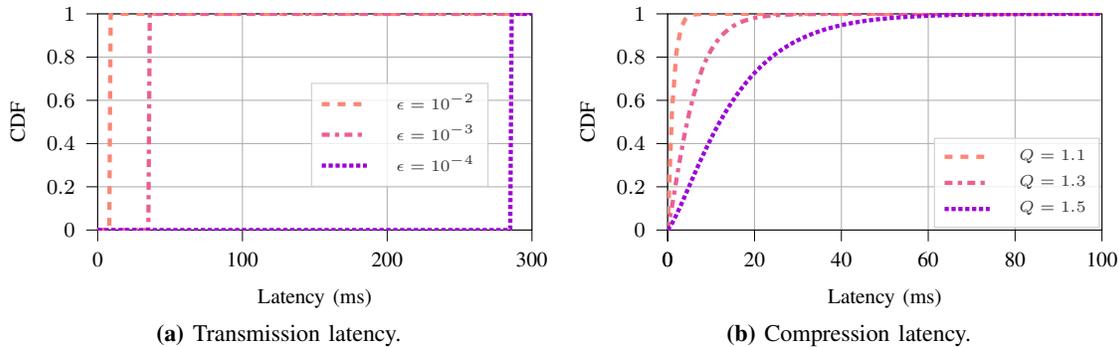

\begin{figure}[t!]
    \centering
       \begin{subfigure}[b]{.45\linewidth}
	    \centering
        \input{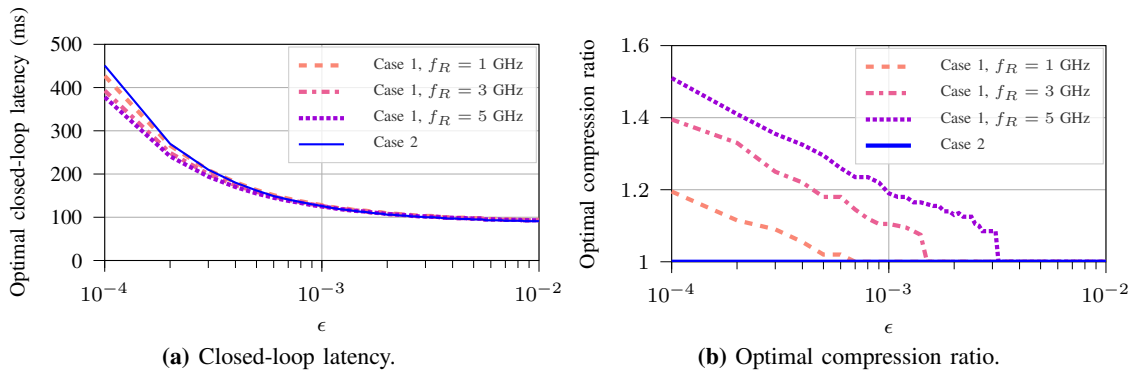}
        \caption{Closed-loop latency.}
        \label{fig:t_opt_eps}
    \end{subfigure}	
	\begin{subfigure}[b]{.45\linewidth}
	    \centering
\begin{tikzpicture}

\begin{axis}[
width=\sfwidth,
height=\sfheight,
xmode=log,
legend cell align={left},
legend style={at={(0.995,0.99)}, anchor=north east, fill opacity=0.8, legend columns=1, draw opacity=1, text opacity=1, font=\tiny, draw=white!80!black},
tick align=outside,
tick pos=left,
x grid style={white!69.0196078431373!black},
ylabel={ Optimal compression ratio  },
ymin=1, ymax=1.6,
xtick style={color=black},
y grid style={white!69.0196078431373!black},
xlabel={$\epsilon$},
xmin=1e-4, xmax=1e-2,
ytick style={color=black},
xmajorgrids,
ymajorgrids
]

\addplot [ultra thick, color2, dashed]
table {%
0.000100000000000000	1.19500000000000
0.000200000000000000	1.11500000000000
0.000300000000000000	1.09000000000000
0.000400000000000000	1.05500000000000
0.000500000000000000	1.02000000000000
0.000600000000000000	1.02000000000000
0.000700000000000000	1
0.000800000000000000	1
0.000900000000000000	1
0.00100000000000000	1
0.00110000000000000	1
0.00120000000000000	1
0.00130000000000000	1
0.00140000000000000	1
0.00150000000000000	1
0.00160000000000000	1
0.00170000000000000	1
0.00180000000000000	1
0.00190000000000000	1
0.00200000000000000	1
0.00210000000000000	1
0.00220000000000000	1
0.00230000000000000	1
0.00240000000000000	1
0.00250000000000000	1
0.00260000000000000	1
0.00270000000000000	1
0.00280000000000000	1
0.00290000000000000	1
0.00300000000000000	1
0.00310000000000000	1
0.00320000000000000	1
0.00330000000000000	1
0.00340000000000000	1
0.00350000000000000	1
0.00360000000000000	1
0.00370000000000000	1
0.00380000000000000	1
0.00390000000000000	1
0.00400000000000000	1
0.00410000000000000	1
0.00420000000000000	1
0.00430000000000000	1
0.00440000000000000	1
0.00450000000000000	1
0.00460000000000000	1
0.00470000000000000	1
0.00480000000000000	1
0.00490000000000000	1
0.00500000000000000	1
0.00510000000000000	1
0.00520000000000000	1
0.00530000000000000	1
0.00540000000000000	1
0.00550000000000000	1
0.00560000000000000	1
0.00570000000000000	1
0.00580000000000000	1
0.00590000000000000	1
0.00600000000000000	1
0.00610000000000000	1
0.00620000000000000	1
0.00630000000000000	1
0.00640000000000000	1
0.00650000000000000	1
0.00660000000000000	1
0.00670000000000000	1
0.00680000000000000	1
0.00690000000000000	1
0.00700000000000000	1
0.00710000000000000	1
0.00720000000000000	1
0.00730000000000000	1
0.00740000000000000	1
0.00750000000000000	1
0.00760000000000000	1
0.00770000000000000	1
0.00780000000000000	1
0.00790000000000000	1
0.00800000000000000	1
0.00810000000000000	1
0.00820000000000000	1
0.00830000000000000	1
0.00840000000000000	1
0.00850000000000000	1
0.00860000000000000	1
0.00870000000000000	1
0.00880000000000000	1
0.00890000000000000	1
0.00900000000000000	1
0.00910000000000000	1
0.00920000000000000	1
0.00930000000000000	1
0.00940000000000000	1
0.00950000000000000	1
0.00960000000000000	1
0.00970000000000000	1
0.00980000000000000	1
0.00990000000000000	1
0.0100000000000000	1
};
\addlegendentry{Case 1, $f_R=1$ GHz}

\addplot [ ultra thick, color3,dashdotted]
table {%
0.000100000000000000	1.39500000000000
0.000200000000000000	1.33000000000000
0.000300000000000000	1.25000000000000
0.000400000000000000	1.22000000000000
0.000500000000000000	1.18000000000000
0.000600000000000000	1.18000000000000
0.000700000000000000	1.14500000000000
0.000800000000000000	1.12000000000000
0.000900000000000000	1.10500000000000
0.00100000000000000	1.10500000000000
0.00110000000000000	1.10000000000000
0.00120000000000000	1.09500000000000
0.00130000000000000	1.08500000000000
0.00140000000000000	1.07500000000000
0.00150000000000000	1
0.00160000000000000	1
0.00170000000000000	1
0.00180000000000000	1
0.00190000000000000	1
0.00200000000000000	1
0.00210000000000000	1
0.00220000000000000	1
0.00230000000000000	1
0.00240000000000000	1
0.00250000000000000	1
0.00260000000000000	1
0.00270000000000000	1
0.00280000000000000	1
0.00290000000000000	1
0.00300000000000000	1
0.00310000000000000	1
0.00320000000000000	1
0.00330000000000000	1
0.00340000000000000	1
0.00350000000000000	1
0.00360000000000000	1
0.00370000000000000	1
0.00380000000000000	1
0.00390000000000000	1
0.00400000000000000	1
0.00410000000000000	1
0.00420000000000000	1
0.00430000000000000	1
0.00440000000000000	1
0.00450000000000000	1
0.00460000000000000	1
0.00470000000000000	1
0.00480000000000000	1
0.00490000000000000	1
0.00500000000000000	1
0.00510000000000000	1
0.00520000000000000	1
0.00530000000000000	1
0.00540000000000000	1
0.00550000000000000	1
0.00560000000000000	1
0.00570000000000000	1
0.00580000000000000	1
0.00590000000000000	1
0.00600000000000000	1
0.00610000000000000	1
0.00620000000000000	1
0.00630000000000000	1
0.00640000000000000	1
0.00650000000000000	1
0.00660000000000000	1
0.00670000000000000	1
0.00680000000000000	1
0.00690000000000000	1
0.00700000000000000	1
0.00710000000000000	1
0.00720000000000000	1
0.00730000000000000	1
0.00740000000000000	1
0.00750000000000000	1
0.00760000000000000	1
0.00770000000000000	1
0.00780000000000000	1
0.00790000000000000	1
0.00800000000000000	1
0.00810000000000000	1
0.00820000000000000	1
0.00830000000000000	1
0.00840000000000000	1
0.00850000000000000	1
0.00860000000000000	1
0.00870000000000000	1
0.00880000000000000	1
0.00890000000000000	1
0.00900000000000000	1
0.00910000000000000	1
0.00920000000000000	1
0.00930000000000000	1
0.00940000000000000	1
0.00950000000000000	1
0.00960000000000000	1
0.00970000000000000	1
0.00980000000000000	1
0.00990000000000000	1
0.0100000000000000	1
};
\addlegendentry{Case 1, $f_R=3$ GHz}

\addplot [ultra thick, color5,densely dotted]
table {%
0.000100000000000000	1.51000000000000
0.000200000000000000	1.41000000000000
0.000300000000000000	1.35500000000000
0.000400000000000000	1.32500000000000
0.000500000000000000	1.29500000000000
0.000600000000000000	1.26000000000000
0.000700000000000000	1.23500000000000
0.000800000000000000	1.23500000000000
0.000900000000000000	1.22000000000000
0.00100000000000000	1.19000000000000
0.00110000000000000	1.18000000000000
0.00120000000000000	1.18000000000000
0.00130000000000000	1.16500000000000
0.00140000000000000	1.16500000000000
0.00150000000000000	1.16000000000000
0.00160000000000000	1.15500000000000
0.00170000000000000	1.15500000000000
0.00180000000000000	1.14000000000000
0.00190000000000000	1.14000000000000
0.00200000000000000	1.13000000000000
0.00210000000000000	1.13500000000000
0.00220000000000000	1.12500000000000
0.00230000000000000	1.12500000000000
0.00240000000000000	1.12500000000000
0.00250000000000000	1.10500000000000
0.00260000000000000	1.10000000000000
0.00270000000000000	1.08500000000000
0.00280000000000000	1.08500000000000
0.00290000000000000	1.08500000000000
0.00300000000000000	1.08500000000000
0.00310000000000000	1.08500000000000
0.00320000000000000	1
0.00330000000000000	1
0.00340000000000000	1
0.00350000000000000	1
0.00360000000000000	1
0.00370000000000000	1
0.00380000000000000	1
0.00390000000000000	1
0.00400000000000000	1
0.00410000000000000	1
0.00420000000000000	1
0.00430000000000000	1
0.00440000000000000	1
0.00450000000000000	1
0.00460000000000000	1
0.00470000000000000	1
0.00480000000000000	1
0.00490000000000000	1
0.00500000000000000	1
0.00510000000000000	1
0.00520000000000000	1
0.00530000000000000	1
0.00540000000000000	1
0.00550000000000000	1
0.00560000000000000	1
0.00570000000000000	1
0.00580000000000000	1
0.00590000000000000	1
0.00600000000000000	1
0.00610000000000000	1
0.00620000000000000	1
0.00630000000000000	1
0.00640000000000000	1
0.00650000000000000	1
0.00660000000000000	1
0.00670000000000000	1
0.00680000000000000	1
0.00690000000000000	1
0.00700000000000000	1
0.00710000000000000	1
0.00720000000000000	1
0.00730000000000000	1
0.00740000000000000	1
0.00750000000000000	1
0.00760000000000000	1
0.00770000000000000	1
0.00780000000000000	1
0.00790000000000000	1
0.00800000000000000	1
0.00810000000000000	1
0.00820000000000000	1
0.00830000000000000	1
0.00840000000000000	1
0.00850000000000000	1
0.00860000000000000	1
0.00870000000000000	1
0.00880000000000000	1
0.00890000000000000	1
0.00900000000000000	1
0.00910000000000000	1
0.00920000000000000	1
0.00930000000000000	1
0.00940000000000000	1
0.00950000000000000	1
0.00960000000000000	1
0.00970000000000000	1
0.00980000000000000	1
0.00990000000000000	1
0.0100000000000000	1
};
\addlegendentry{Case 1, $f_R=5$ GHz}

\addplot[ultra thick, color6]
table {%
0.0001	1
0.0002	1
0.0003	1
0.0004	1
0.0005	1
0.0006	1
0.0007	1
0.0008	1
0.0009	1
0.001	1
0.0011	1
0.0012	1
0.0013	1
0.0014	1
0.0015	1
0.0016	1
0.0017	1
0.0018	1
0.0019	1
0.002	1
0.0021	1
0.0022	1
0.0023	1
0.0024	1
0.0025	1
0.0026	1
0.0027	1
0.0028	1
0.0029	1
0.003	1
0.0031	1
0.0032	1
0.0033	1
0.0034	1
0.0035	1
0.0036	1
0.0037	1
0.0038	1
0.0039	1
0.004	1
0.0041	1
0.0042	1
0.0043	1
0.0044	1
0.0045	1
0.0046	1
0.0047	1
0.0048	1
0.0049	1
0.005	1
0.0051	1
0.0052	1
0.0053	1
0.0054	1
0.0055	1
0.0056	1
0.0057	1
0.0058	1
0.0059	1
0.006	1
0.0061	1
0.0062	1
0.0063	1
0.0064	1
0.0065	1
0.0066	1
0.0067	1
0.0068	1
0.0069	1
0.007	1
0.0071	1
0.0072	1
0.0073	1
0.0074	1
0.0075	1
0.0076	1
0.0077	1
0.0078	1
0.0079	1
0.008	1
0.0081	1
0.0082	1
0.0083	1
0.0084	1
0.0085	1
0.0086	1
0.0087	1
0.0088	1
0.0089	1
0.009	1
0.0091	1
0.0092	1
0.0093	1
0.0094	1
0.0095	1
0.0096	1
0.0097	1
0.0098	1
0.0099	1
0.01	1
};
\addlegendentry{Case 2}

\end{axis}

\end{tikzpicture}
\vspace{-1.05cm}
        \caption{Optimal compression ratio.}
        \label{fig:q_opt_eps}
    \end{subfigure}
\caption{Variation of the optimal closed-loop latency and compression ratio for different cases with $\rho_\text{th}=0.95$.}
    \label{fig:t_opt_vs_reliability}
\end{figure}

\subsection{Optimal System Design} 
The latency incurred in data transmission and compression is shown in Fig. \ref{fig:cdf_t_tx} through their CDF variation. The CDF of the latency incurred in transmitting the sensing data of volume $0.5$ Mb is shown in Fig. \ref{fig:cdf_t_tx}a for different levels of link outage. It may be noted that the transmission time increases significantly as the link outage level becomes stringent. This indicates that the data transmission with high level of accuracy demands for relatively higher transmission time. The CDF of the latency incurred in compressing the sensing data of volume $0.5$ Mb is shown in Fig. \ref{fig:t_opt_compr} for different compression ratio $Q$ with $f_R=5$GHz. The compression time increases significantly with increase in compression ratio. However, the higher compression ratio reduces the volume of sensing data which demands for relatively less time in transmission, and vice-versa. Thus, there is a trade-off between compression and transmission time.

Variation of the optimal closed-loop latency against outage $\epsilon$ is shown in Fig. \ref{fig:t_opt_eps}  for both cases.
The optimal latency is high for stringent outage requirement and it decreases as the outage probability increases. This happens because higher outage offers higher data rate which takes lower latency in data transmission.  
Although the re-transmission due to higher outage will not affect the transmission time severely due to higher data transmission rate. The effect of the computational capability of the robot has a significant impact on the latency, and the optimal value of closed-loop latency decreases as the computational capability of the robot increases.
It may be noted that the optimal latency for Case 1 is much lower than that compared to Case 2 for stringent outage requirement. 
However, the optimal latency converges towards Case 2 as the outage increases even for higher computational capability of the robot. 
This happens because higher outage offers higher data transmission rate, and hence transmission of raw sensing data is beneficial rather than compressing it. 
The optimal compression ratio against outage probability is shown in Fig.~\ref{fig:q_opt_eps}, which depends upon the computational capability of the robot as well as the outage probability. 
The optimal compression ratio decreases as outage probability increases, and the compression ratio is highest for the robot having highest computational capability.
As outage increases the optimal compression ratio converges towards $1$, i.e., no compression as in Case 2.
This observation also justifies the fact that the optimal latency converges towards Case 2 as outage increases as shown in Fig.~\ref{fig:t_opt_eps}.

\begin{remark}
Data compression is not always beneficial. The decision about whether to compress the sensing data or not depends upon the required outage constraint as well as the computational capability of the robot.
\end{remark}


\subsection{Statistical  vs  Expected Sense System Design}
The works reported in \cite{onu_1, onu_2, onu_3, coll_computing, IoT_TI, NFV_e2e_QoS} consider the system design in expected or average sense rather than in stochastic sense. Average sense design is far from the real-life deployment scenario because the data transmission over wireless network suffers from several impairments and one of them is jitter. But, the average sense design doesn’t consider the reliability criteria on random latency or jitter. 
Therefore, the reliability criteria on latency must be taken into consideration while designing the teleoperation system with low latency and high reliability requirement. Here, we will investigate the comparative analysis of the optimal closed-loop latency with different reliability criteria in stochastic sense and that in average sense for Case 1 only. However, similar shortcomings are also noted for Case 2, which is omitted due to space constraints. 
The closed-loop latency in average sense is optimized using the average latency obtained in \eqref{eq:mean_t_1} from the following 
 optimization problem  
\begin{equation}
\nonumber 
\begin{aligned}
 \textbf{(P2)}:~~ & \min_{Q, \epsilon} \quad \mu_{T_1}, \;\;  \text{s. t.} \;\; {\textbf{(C2)}} \;\; \mbox{and} \;\; {\textbf{(C3)}}  \\ 
\end{aligned}
\label{eq:opt_prob_p1_avg}
\end{equation}
The average sense design in {\textbf{(P2)}} doesn’t take into account the statistical guarantee, as the average closed-loop latency is optimized. {\textbf{(P2)}} is found to be a convex optimization problem that can be solved using CVX. The proof of convexity of {\textbf{(P2)}} is omitted for brevity.

Variation of the optimal closed-loop latency against link outage is shown in Fig. \ref{fig:comparison_with_expected_sense}a for different stochastic reliability level, such as $\varrho_\text{th}=0.95, 0.99, 0.999$. It may be noted that the optimal latency increases significantly with increase in $\varrho_\text{th}$. The amount of increase in the optimal latency is higher for $\varrho_\text{th}=0.99$ to $\varrho_\text{th}=0.95$ as compared to $\varrho_\text{th}=0.99$. The comparative view of statistical and average sense design is depicted in Fig. \ref{fig:comparison_with_expected_sense}b for $\epsilon=10^{-4}$ and $f_R=5$GHz. It may be noted that the average sense design satisfies the statistical guarantee on the closed-loop latency around $\varrho_\text{th}=0.54$ only, which is not acceptable for low-latency and high reliability applications in real-life deployment scenario. This will lead to under-resource provisioning with potential performance degradation. Thus, the system design in average sense is not suitable, {as it leads to excessively high error rates for high-reliability applications}. This justifies the consideration of statistical guarantee on low-latency and high reliability system design, as reliability of several $9$’s will be required for several applications in coming days.

\begin{figure}[t!]
    \centering
       \begin{subfigure}[b]{0.45\linewidth}
	    \centering
        \input{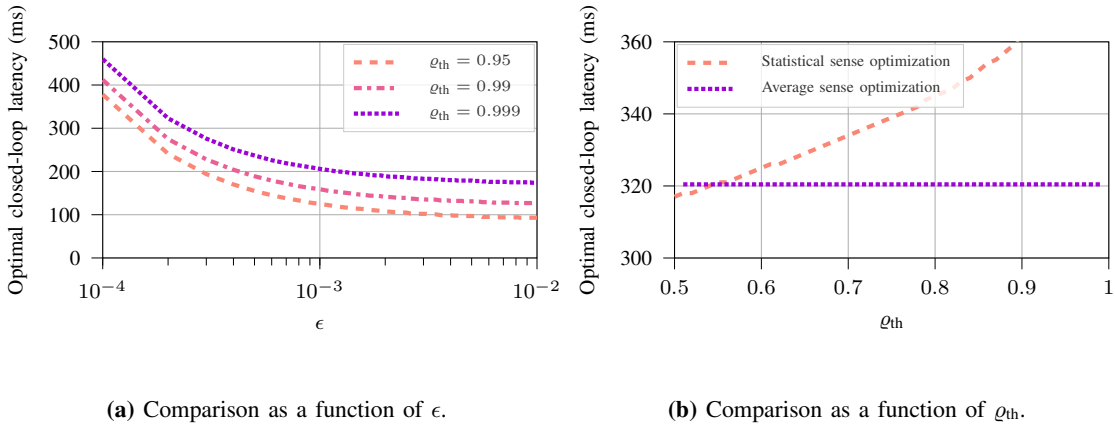}
        \caption{Comparison as a function of $\epsilon$.}
        \label{fig:t_opt_vs_avg_lat}
    \end{subfigure}	
	\begin{subfigure}[b]{0.45\linewidth}
	    \centering
\begin{tikzpicture}

\begin{axis}[
width=\sfwidth,
height=\sfheight,
legend cell align={left},
legend style={at={(0.005,0.99)}, anchor = north west, fill opacity=0.8, legend columns=1, draw opacity=1, text opacity=1, font=\tiny, draw=white!80!black},
tick align=outside,
tick pos=left, 
x grid style={white!69.0196078431373!black},
ylabel={Optimal closed-loop latency  (ms)},
ymin=0.3, ymax=0.36,
ytick={0.3,0.320,0.340,0.360},
yticklabels={300,320,340,360},
xtick style={color=black},
y grid style={white!69.0196078431373!black},
xlabel={$\varrho_\text{th}$},
xmin=0.5, xmax=1,
ytick style={color=black},
xmajorgrids,
ymajorgrids
]

\addplot [ultra thick, color2, dashed]
table {%
0.500000000000000	0.317000000000000
0.510000000000000	0.318000000000000
0.520000000000000	0.318000000000000
0.530000000000000	0.319000000000000
0.540000000000000	0.320000000000000
0.550000000000000	0.321000000000000
0.560000000000000	0.321000000000000
0.570000000000000	0.322000000000000
0.580000000000000	0.323000000000000
0.590000000000000	0.324000000000000
0.600000000000000	0.325000000000000
0.610000000000000	0.326000000000000
0.620000000000000	0.326000000000000
0.630000000000000	0.327000000000000
0.640000000000000	0.328000000000000
0.650000000000000	0.329000000000000
0.660000000000000	0.330000000000000
0.670000000000000	0.331000000000000
0.680000000000000	0.332000000000000
0.690000000000000	0.333000000000000
0.700000000000000	0.334000000000000
0.710000000000000	0.335000000000000
0.720000000000000	0.336000000000000
0.730000000000000	0.337000000000000
0.740000000000000	0.338000000000000
0.750000000000000	0.339000000000000
0.760000000000000	0.340000000000000
0.770000000000000	0.341000000000000
0.780000000000000	0.342000000000000
0.790000000000000	0.344000000000000
0.800000000000000	0.345000000000000
0.810000000000000	0.346000000000000
0.820000000000000	0.347000000000000
0.830000000000000	0.349000000000000
0.840000000000000	0.350000000000000
0.850000000000000	0.352000000000000
0.860000000000000	0.354000000000000
0.870000000000000	0.355000000000000
0.880000000000000	0.357000000000000
0.890000000000000	0.359000000000000
0.900000000000000	0.362000000000000
0.910000000000000	0.364000000000000
0.920000000000000	0.367000000000000
0.930000000000000	0.370000000000000
0.940000000000000	0.373000000000000
0.950000000000000	0.378000000000000
0.960000000000000	0.383000000000000
0.970000000000000	0.389000000000000
0.980000000000000	0.398000000000000
0.990000000000000	0.412000000000000
};
\addlegendentry{Statistical sense optimization}

\addplot [ultra thick, color5,densely dotted]
table {
0.510000000000000	0.320440000000000
0.520000000000000	0.320440000000000
0.530000000000000	0.320440000000000
0.540000000000000	0.320440000000000
0.550000000000000	0.320440000000000
0.560000000000000	0.320440000000000
0.570000000000000	0.320440000000000
0.580000000000000	0.320440000000000
0.590000000000000	0.320440000000000
0.600000000000000	0.320440000000000
0.610000000000000	0.320440000000000
0.620000000000000	0.320440000000000
0.630000000000000	0.320440000000000
0.640000000000000	0.320440000000000
0.650000000000000	0.320440000000000
0.660000000000000	0.320440000000000
0.670000000000000	0.320440000000000
0.680000000000000	0.320440000000000
0.690000000000000	0.320440000000000
0.700000000000000	0.320440000000000
0.710000000000000	0.320440000000000
0.720000000000000	0.320440000000000
0.730000000000000	0.320440000000000
0.740000000000000	0.320440000000000
0.750000000000000	0.320440000000000
0.760000000000000	0.320440000000000
0.770000000000000	0.320440000000000
0.780000000000000	0.320440000000000
0.790000000000000	0.320440000000000
0.800000000000000	0.320440000000000
0.810000000000000	0.320440000000000
0.820000000000000	0.320440000000000
0.830000000000000	0.320440000000000
0.840000000000000	0.320440000000000
0.850000000000000	0.320440000000000
0.860000000000000	0.320440000000000
0.870000000000000	0.320440000000000
0.880000000000000	0.320440000000000
0.890000000000000	0.320440000000000
0.900000000000000	0.320440000000000
0.910000000000000	0.320440000000000
0.920000000000000	0.320440000000000
0.930000000000000	0.320440000000000
0.940000000000000	0.320440000000000
0.950000000000000	0.320440000000000
0.960000000000000	0.320440000000000
0.970000000000000	0.320440000000000
0.980000000000000	0.320440000000000
0.990000000000000	0.320440000000000
};
\addlegendentry{Average sense optimization}

\end{axis}

\end{tikzpicture}
\vspace{-0.3cm}
        \caption{Comparison as a function of $\varrho_{\text{th}}$.}
        \label{fig:t_opt_vs_avg_rho}
    \end{subfigure}
   \caption{Performance comparison of the proposed framework with average sense design for Case 1 with $f_R = 5$ GHz.} 
    \label{fig:comparison_with_expected_sense}
\end{figure}

\begin{remark}
The system design in average sense leads to under-provisioning and a potential performance degradation, which may have severe consequences in ultra-low latency and high reliability applications.
\end{remark}



\section{Concluding Remarks}\label{sec:conc}
In this paper we have introduced a framework to analyze the closed-loop latency of the teleoperation system, where the command data from the HO and sensing data from the robot are exchanged over a wireless connection through a BS with MEC capabilities.
The sensing data is compressed before transmitting it to the BS where it decompressed first to get the original raw data.
The high-level command from the HO and the sensing data from the robot are processed by the BS to extract the low-level command for the robot and the feedback signal, respectively, and then sent to the robot, and HO, respectively. 
We have analyzed the closed-loop latency, which is found to be a sum of several RVs. Thus, the closed-loop latency is also an RV, which demands for statistical guarantee (termed as reliability) on the closed-loop latency. 
We obtained upper and lower bounds to its distribution, and formulated an optimization problem to control the transmission rate and compression ratio.
We then investigated different trade-offs in the achievable performance in terms of latency, link outage, and transmission reliability: the decision on whether to compress mostly depends on the computational capability of the robot and link outage probability.  We have also observed the shortcomings of the design approaches that only consider the expected value of the latency, disregarding worst-case outcomes.

Further work includes investigations of the age of the closed-loop of the teleoperation system from the information freshness perspective. Another interesting direction is the consideration of heterogeneous sensing data having different compression profiles. 


\ifCLASSOPTIONcaptionsoff
 \newpage
\fi

\appendix

\subsection{Proof of Lemma 1}
\label{lemma_1}
Let $\mathcal{W}$ be the distribution of sum of two independent RVs $\mathcal{U}$ and $\mathcal{V}$, so that $f_\mathcal{W}(t) = \int_{-\infty}^{\infty} f_\mathcal{V}(\tau)f_\mathcal{U}(t-\tau) d \tau$. Now, let $f_\mathcal{U}(t)$ be a convex function of $t$. We then have $f_\mathcal{U}((1-\theta) t_1 + \theta t_2) \leq (1-\theta) f_\mathcal{U}(t_1) + \theta f_\mathcal{U}(t_2), 0 \leq \theta \leq 1$. Now, $f_\mathcal{W}(1-\theta) t_1 + \theta t_2)$ is given as 
\begin{equation} \nonumber
\begin{aligned}
f_\mathcal{W}(t)((1-\theta) t_1 + \theta t_2) = {} & \int_{-\infty}^{\infty} f_\mathcal{V}(\tau)f_\mathcal{U}((1-\theta) t_1 + \theta t_2 - \tau) d \tau . \\
 = {} & \int_{-\infty}^{\infty} f_\mathcal{V}(\tau)f_\mathcal{U}((1-\theta)(t_1-\tau) + \theta (t_2-\tau)) d \tau .
 \end{aligned}
\end{equation}
Using the convexity of $f_\mathcal{U}(t)$, $f_\mathcal{W}((1-\theta) t_1 + \theta t_2)$ can be written as:
\begin{equation} \nonumber
\begin{aligned}
f_\mathcal{W}((1-\theta) t_1 + \theta t_2) \leq {} & (1-\theta) \int_{-\infty}^{\infty} f_\mathcal{V}(\tau) f_\mathcal{U}(t_1-\tau) d \tau + \theta \int_{-\infty}^{\infty} f_\mathcal{V}(\tau) f_\mathcal{U}(t_2-\tau) d \tau \\
 = {} & (1-\theta) f_\mathcal{W}(t_1) + \theta f_\mathcal{W}(t_2) .
 \end{aligned}
\end{equation}
Hence, $f_\mathcal{W}(t)$ is a convex function of $t$. 
 The concave nature can be proven in the same way.

\subsection{Proof of Theorem 1}
\label{theorem_1} 

Let $F_{T_1}(t)$ denote the CDF of $T_1$, which is given as: $F_{T_1}(t) = \int_{-\infty}^{t} f_{T_1}(x)dx$, where $f_{T_1}(\cdot)$ denotes the PDF of $T_1$. The second derivative of $F_{T_1}(t)$ is $\frac{d^2}{dt^2} F_{T_1}(t) = \frac{d}{dt} f_{T_1}(t) = f_{T_1}'(t)$.
For a function to be convex (concave), its second derivative should be non-negative (non-positive). Thus, the convexity or concavity of $F_{T_1}(t)$ depends upon the nature of $f_{T_1}'(t)$. If $f_{T_1}'(t)$ would be convex (concave) then its first derivative will be positive if and only it would be bending upward (downward), whereas its first derivative will be negative if it would be bending upward (downward).
But $f_{T_1}(t)$ is neither a convex nor a concave function of $t$ (see Remark \ref{rem:pdf_t_1}), which does not indicate about the variation of $f_{T_1}'(t)$.

\begin{table} [!b] 
\begin{center}
\caption{ Nature of different constituent distributions of closed-loop latency $T_1$ (see \eqref{eq:clt_alpha_0_1}) }
\label{tab:nature_rv} 
\begin{tabular}{p{0.25\textwidth}|p{0.12\textwidth}p{0.12\textwidth}p{0.12\textwidth}}
\toprule
 & \textbf{Unimodal} & \textbf{Symmetric} & \textbf{Log-concave} \\
\midrule
$T_{\text{tx}}^c, T_{\text{tx}}^f$ (cf. Remark \ref{rem:gaussian_dist_nature}) & Yes & Yes & Yes \\
$T_c^c$ (cf. Remark \ref{rem:gamma_dist_nature}) & Yes & No & Yes \\
$T_d^f$ (cf. Remark \ref{rem:gamma_dist_nature}) & Yes & No & Yes \\
$T_c^f$ (cf. Remark \ref{rem:gamma_dist_nature}) & Yes & No & Yes \\ 
$T_{cp}^f$ (cf. Remark \ref{rem:gamma_dist_nature}) & Yes & No & Yes \\ 
\bottomrule
\end{tabular} 
\end{center} 
\end{table}

The RV $T_1$ given in \eqref{eq:clt_alpha_0_1} is the sum of six RVs. The properties of the PDFs of these six constituent RVs are listed in Table \ref{tab:nature_rv}. It is known that the PDF of sum of RVs is the convolution of their PDFs. Therefore, the nature of the constituent PDF will determine the characteristics of the resulting PDF. 
 The study in \cite{log_concave_1, unimodal_convex} investigated the nature of the convolution of two functions with following key observations: firstly, the convolution of two log-concave function is a log-concave function, secondly, the convolution of a log-concave function and a unimodal function is also a unimodal function, and thirdly, the convolution of two asymmetric unimodal functions is a multi-modal function. 
Based on these findings and the nature of constituent PDFs of $T_1$ listed in Table \ref{tab:nature_rv}, one can deduce that $f_{T_1}(t)$ is a multi-modal function of $t$. Thus, the first derivative of $f_{T_1}(t) = f_{T_1}'(t)$ changes its sign multiple times in the domain of definition. Hence, $F_{T_1}(t)$ is neither a convex nor a concave function of $t$.

\subsection{Proof of Theorem 2} \label{theorem_2} 
 This can also be proven in the same way as Theorem 1, following the steps in Appendix \ref{theorem_1}.

\subsection{Proof of Theorem 3}
\label{theorem_4} 

Let $\mathcal{X}$ and $\mathcal{Y}$ be two non-negative independent RVs with their respective PDFs/CDFs $f_\mathcal{X}/F_\mathcal{X}$ and $f_\mathcal{Y}/F_\mathcal{Y}$. Let $\mathcal{Z} = \mathcal{X} + \mathcal{Y}$ with PDF/CDF $f_\mathcal{Z}/F_\mathcal{Z}$, which are calculated as follows:
 \begin{equation}
 \nonumber
\begin{split}
f_\mathcal{Z}(z) = {} & \int_{-\infty}^{\infty} f_\mathcal{X}(x) f_\mathcal{Y}(z-x) dx, \; F_\mathcal{Z}(z) = \int_{-\infty}^{\infty} f_\mathcal{X}(x) F_\mathcal{Y}(z-x) dx,
\end{split}
\end{equation}
Let $F_\mathcal{X}^{-1}(\varrho_{th}) = x_0 \geq 0$, $F_\mathcal{Y}^{-1} (\varrho_{th})=y_0 \geq 0$ and $F_\mathcal{Z}^{-1} (\varrho_{th})=z_0 \geq 0$. Now, $F_\mathcal{Z}(z=x_0+y_0)$ can be calculated as 
\begin{equation}
 \nonumber 
\begin{aligned}
F_\mathcal{Z}(z=x_0+y_0) = {} & \int_{-\infty}^{\infty} f_\mathcal{X}(x) F_\mathcal{Y}(x_0+y_0-x) dx = \int_{-\infty}^{\infty} f_\mathcal{X}(x_0-w) F_\mathcal{Y}(y_0+w) dw. \\ 
\end{aligned}
\end{equation}
It is known that $\mathcal{X}$ and $\mathcal{Y}$ are non-negative RVs, i.e, $f_\mathcal{X}(x)=0 \; \mbox{for} \; x<0$ and $f_\mathcal{Y}(y)=0 \; \mbox{for} \; y<0$. Thus, $F_\mathcal{Z}(z=x_0+y_0)$ can be written as:  $F_\mathcal{Z}(z=x_0+y_0) =  \int_{-y_0}^{x_0 } f_\mathcal{X}(x_0-w) F_\mathcal{Y}(y_0+w) dw$.
As $0 \leq F_\mathcal{Y}(y_0+w) \leq 1$, using the mean value theorem, $F_\mathcal{Z}(z=x_0+y_0)$ can be rewritten as
\begin{equation} 
\nonumber 
\begin{aligned}
F_\mathcal{Z}(z=x_0+y_0) = {} & K_1 \int_{-y_0}^{x_0 } f_\mathcal{X}(x_0-w) dw = K_1 F_\mathcal{X}(x_0+y_0), \; 0 \leq K_1 \leq 1. \\
\end{aligned}
\end{equation}
Based on the above finding, we have the following facts: firstly, $F_\mathcal{Z}(x_0+y_0) \leq F_\mathcal{X}(x_0+y_0)$, because $0 \leq K_1 \leq 1$, and secondly, $F_\mathcal{X}(x_0+y_0) \geq \varrho_{th} = F_\mathcal{Z}(z_0)$, because $F_\mathcal{X}(x_0)=\varrho_{th}$ and $x_0, y_0 \geq 0$. 
 Based on these facts, the following can be written 
\begin{equation}\label{eq:bound_z}
F_\mathcal{Z}(x_0+y_0) \leq F_\mathcal{Z}(z_0) \Rightarrow z_0 \leq x_0 + y_0 .
\end{equation}
Now, $F_\mathcal{Z}(z=x_0)$ can be obtained as: $F_\mathcal{Z}(z=x_0) =   \int_{-\infty}^{\infty} f_\mathcal{X}(x) F_\mathcal{Y}(x_0-x) dx$.   Using the mean value theorem, $F_\mathcal{Z}(z=x_0)$ can be rewritten as
\begin{equation}
\nonumber
F_\mathcal{Z}(z=x_0) = K_2 \int_{0}^{x_0} f_\mathcal{X}(x) dx = K_2 F_\mathcal{X}(x_0), \;\; 0 \leq K_2 \leq 1 .
\end{equation}
This implies that
\begin{equation}\label{eq:bound_x}
F_\mathcal{Z}(x_0) \leq F_\mathcal{X}(x_0) = \varrho_{th} = F_\mathcal{Z}(z_0) \Rightarrow z_0 \geq x_0 .
\end{equation}
In the same way, $F_\mathcal{Z}(y_0)$ can also be obtained:
\begin{equation} \label{eq:bound_y}
F_\mathcal{Z}(y_0) \leq F_\mathcal{Y}(x_0 + y_0) = \varrho_{th} = F_\mathcal{Z}(z_0) \Rightarrow z_0 \geq y_0.
\end{equation}
Thus, using \eqref{eq:bound_z}, \eqref{eq:bound_x}, and \eqref{eq:bound_y}, we can finally write
\begin{equation} \label{eq:bound_x_y_z}
\mbox{max}(x_0, y_0) \leq w_0 \leq x_0 + y_0 .
\end{equation}
Following the above mentioned procedure, if $\mathcal{W} = \mathcal{X} + \mathcal{Y} + \mathcal{S}$ is an RV with $F_\mathcal{W}^{-1}(w_0) = \varrho_{th}$ and $F_\mathcal{S}^{-1}(s_0) = \varrho_{th}$, then the following can be written:
\begin{equation} \label{eq:sum_of_three}
\nonumber 
\mathcal{W} = (\mathcal{X} + \mathcal{Y}) + \mathcal{S} = \mathcal{Z} + \mathcal{S}.
\end{equation}
From \eqref{eq:bound_x_y_z} and \eqref{eq:sum_of_three}, we can obtain the following bound
\begin{equation}\label{eq:bound_x_y_z_w}
\mbox{max}(z_0, s_0) \leq w_0 \leq z_0 + s_0 \Rightarrow \mbox{max}(x_0, y_0, s_0) \leq w_0 \leq x_0 + y_0 + s_0.
\end{equation} 
Thus, in the same way, the bound in \eqref{eq:bound_x_y_z_w} can be obtained for the sum of large numbers of independent non-negative RVs.

Now, if $\mathcal{W}$ be a non-negative RV having mean $\mu_\mathcal{W}$ then using Markov's inequality, the upper bound on  $\mathcal{W}$ can be given as: $\mbox{Pr}\{\mathcal{W} \geq w \} \leq \frac{\mu_\mathcal{W}}{w} \Rightarrow 1 - F_{\mathcal{W}}(w) \leq \frac{\mu_\mathcal{W}}{w} \Rightarrow F_{\mathcal{W}}(w) \geq 1 - \frac{\mu_\mathcal{W}}{w}.$   If $F_{\mathcal{W}}(w) \geq \varrho_{th}$ (which is the case in constraint {\textbf{(C1)}} of optimization problem {\textbf{(P1)}}), following inequality can be written
\begin{equation} \label{eq:markov_inequality}
1 - \frac{\mu_\mathcal{W} }{w} \leq \varrho_{th} \Rightarrow w \leq \frac{\mu_\mathcal{W}}{1-\varrho_{th}} 
\end{equation}

From \eqref{eq:bound_x_y_z_w} and \eqref{eq:markov_inequality}, following can be written 
\begin{equation} \label{eq:bound_on_three}
\mbox{max}(x_0, y_0, s_0) \leq w_0 \leq \mbox{min}(x_0 + y_0 + s_0, \frac{\mu_\mathcal{W}}{1-\varrho_{th}}) 
\end{equation}

Finally, the closed-loop latency $T_1$ (see \eqref{eq:clt_alpha_0_1}) is the sum of six non-negative independent RVs. Using the findings 
\eqref{eq:bound_x_y_z_w} ,\eqref{eq:markov_inequality}, and \eqref{eq:bound_on_three}, the bounds for RV $T_1$ can be obtained.

\subsection{Proof of Theorem 4}
\label{theorem_5} 
 This can also be proven in the same way as Theorem 3, following the steps in Appendix \ref{theorem_4}.

\bibliographystyle{IEEEtran}
\bibliography{ref_final}

\begin{thebibliography}{10}
\providecommand{\url}[1]{#1}
\csname url@samestyle\endcsname
\providecommand{\newblock}{\relax}
\providecommand{\bibinfo}[2]{#2}
\providecommand{\BIBentrySTDinterwordspacing}{\spaceskip=0pt\relax}
\providecommand{\BIBentryALTinterwordstretchfactor}{4}
\providecommand{\BIBentryALTinterwordspacing}{\spaceskip=\fontdimen2\font plus
\BIBentryALTinterwordstretchfactor\fontdimen3\font minus
  \fontdimen4\font\relax}
\providecommand{\BIBforeignlanguage}[2]{{%
\expandafter\ifx\csname l@#1\endcsname\relax
\typeout{** WARNING: IEEEtran.bst: No hyphenation pattern has been}%
\typeout{** loaded for the language `#1'. Using the pattern for}%
\typeout{** the default language instead.}%
\else
\language=\csname l@#1\endcsname
\fi
#2}}
\providecommand{\BIBdecl}{\relax}
\BIBdecl

\bibitem{Fettweis_TI}
G.~P. {Fettweis}, ``The tactile internet: Applications and challenges,''
  \emph{IEEE Veh. Technol. Mag.}, vol.~9, no.~1, pp. 64--70, 2014.

\bibitem{EMG_teleoperation_1}
J.~{Luo} \emph{et~al.}, ``A teleoperation framework for mobile robots based on
  shared control,'' \emph{IEEE Robot. Autom. Lett.}, vol.~5, no.~2, pp.
  377--384, 2020.

\bibitem{antonakoglou2018toward}
K.~Antonakoglou \emph{et~al.}, ``Toward haptic communications over the {5G
  Tactile Internet},'' \emph{IEEE Comm. Surv. Tut.}, vol.~20, no.~4, pp.
  3034--3059, 2018.

\bibitem{zhao2017estimating}
J.~Zhao \emph{et~al.}, ``Estimating the motion-to-photon latency in head
  mounted displays,'' in \emph{IEEE Virtual Reality (VR)}.\hskip 1em plus 0.5em
  minus 0.4em\relax IEEE, 2017, pp. 313--314.

\bibitem{transparency_work}
S.~{Hirche} and M.~{Buss}, ``Human-oriented control for haptic teleoperation,''
  \emph{Proc. IEEE}, vol. 100, no.~3, pp. 623--647, 2012.

\bibitem{stability_td_hri}
F.~Müller \emph{et~al.}, ``Stability of nonlinear time-delay systems
  describing human–robot interaction,'' \emph{IEEE/ASME Trans. Mechatronics},
  vol.~24, no.~6, pp. 2696--2705, 2019.

\bibitem{kumcu2017effect}
A.~Kumcu \emph{et~al.}, ``Effect of video lag on laparoscopic surgery:
  correlation between performance and usability at low latencies,'' \emph{The
  Int. J. Med. Robot. Computer Assisted Surgery}, vol.~13, no.~2, p. e1758,
  2017.

\bibitem{barba2022remote}
P.~Barba \emph{et~al.}, ``Remote telesurgery in humans: a systematic review,''
  \emph{Surgical Endoscopy}, pp. 1--7, 2022.

\bibitem{valenzuela2019virtual}
D.~Valenzuela-Urrutia \emph{et~al.}, ``Virtual reality-based time-delayed
  haptic teleoperation using point cloud data,'' \emph{J. Intell. Robot.
  Syst.}, vol.~96, no.~3, pp. 387--400, 2019.

\bibitem{effect_of_delay}
Z.~Shi \emph{et~al.}, ``Effects of packet loss and latency on the temporal
  discrimination of visual-haptic events,'' \emph{IEEE Trans. Haptics}, vol.~3,
  no.~1, pp. 28--36, 2010.

\bibitem{low_latency_h2m}
L.~Ruan, M.~P.~I. Dias, and E.~Wong, ``Achieving low-latency human-to-machine
  (h2m) applications: An understanding of h2m traffic for ai-facilitated
  bandwidth allocation,'' \emph{IEEE Internet Things J.}, vol.~8, no.~1, pp.
  626--635, 2021.

\bibitem{closing_the_force_loop}
R.~Balachandran \emph{et~al.}, ``Closing the force loop to enhance transparency
  in time-delayed teleoperation,'' in \emph{2020 IEEE Int. Conf. Robot. Autom.
  (ICRA)}, 2020, pp. 10\,198--10\,204.

\bibitem{IoT_TI}
X.~{Ge}, R.~{Zhou}, and Q.~{Li}, ``{5G NFV}-based {T}actile internet for
  mission-critical {IoT} services,'' \emph{IEEE Internet Things J.}, vol.~7,
  no.~7, pp. 6150--6163, 2020.

\bibitem{NFV_e2e_QoS}
N.~Gholipoor \emph{et~al.}, ``{E2E QoS} guarantee for the {T}actile internet
  via joint {NFV} and radio resource allocation,'' \emph{IEEE Trans. Netw.
  Service Manag.}, vol.~17, no.~3, pp. 1788--1804, 2020.

\bibitem{onu_2}
M.~Maier and A.~Ebrahimzadeh, ``Towards immersive tactile internet experiences:
  {L}ow-latency {FiWi} enhanced mobile networks with edge intelligence
  [invited],'' \emph{IEEE J. Opt. Commun. Netw.}, vol.~11, no.~4, pp. B10--B25,
  2019.

\bibitem{MEC_tactility}
Y.~{Xiao} and M.~{Krunz}, ``Distributed optimization for energy-efficient fog
  computing in the tactile internet,'' \emph{IEEE J. Sel. Areas Commun.},
  vol.~36, no.~11, pp. 2390--2400, 2018.

\bibitem{MEC_TII}
M.~{Aazam}, K.~A. {Harras}, and S.~{Zeadally}, ``Fog computing for {5G} tactile
  industrial {I}nternet of {T}hings: {QoE}-aware resource allocation model,''
  \emph{IEEE Trans. Ind. Informat.}, vol.~15, no.~5, pp. 3085--3092, 2019.

\bibitem{hybrid_caching_TI}
J.~Xu, K.~Ota, and M.~Dong, ``Energy efficient hybrid edge caching scheme for
  tactile internet in 5{G},'' \emph{IEEE Trans. Green Commun. Netw.}, vol.~3,
  no.~2, pp. 483--493, 2019.

\bibitem{TelSurg}
S.~Sedaghat and A.~H. Jahangir, ``{RT-TelSurg: R}eal time telesurgery using
  {SDN}, fog, and cloud as infrastructures,'' \emph{IEEE Access}, vol.~9, pp.
  52\,238--52\,251, 2021.

\bibitem{MEC_survey}
Y.~Mao \emph{et~al.}, ``A survey on mobile edge computing: {T}he communication
  perspective,'' \emph{IEEE Commun. Surveys Tuts.}, vol.~19, no.~4, pp.
  2322--2358, 2017.

\bibitem{MEC_survey_IoTJ}
N.~Abbas \emph{et~al.}, ``Mobile edge computing: {A} survey,'' \emph{IEEE
  Internet Things J.}, vol.~5, no.~1, pp. 450--465, 2018.

\bibitem{our_icc}
S.~Suman \emph{et~al.}, ``Analysis and optimization of the latency budget in
  wireless systems with mobile edge computing,'' in \emph{IEEE Int. Conf.
  Commun.}, 2022, pp. 1--6 (accepted).

\bibitem{o-ran}
``{O-RAN} near-real-time {RAN} intelligent controller architecture \& {E2}
  general aspects and principles – v1.01,'' Tech. Spec., 2020.

\bibitem{cyber_glove}
\BIBentryALTinterwordspacing
Cyberglove {S}ystems {I}nc. (2017). [Online]. Available:
  \url{http://www.cyberglovesystems.com/cybergrasp/}
\BIBentrySTDinterwordspacing

\bibitem{s_dosen_ET_fb}
J.~L. {Dideriksen}, I.~U. {Mercader}, and S.~{Dosen}, ``Closed-loop control
  using electrotactile feedback encoded in frequency and pulse width,''
  \emph{IEEE Trans. Haptics}, vol.~13, no.~4, pp. 818--824, 2020.

\bibitem{onu_1}
M.~Chowdhury and M.~Maier, ``Local and nonlocal human-to-robot task allocation
  in fiber-wireless multi-robot networks,'' \emph{IEEE Syst. J.}, vol.~12,
  no.~3, pp. 2250--2260, 2018.

\bibitem{onu_3}
A.~Ebrahimzadeh and M.~Maier, ``Delay-constrained teleoperation task scheduling
  and assignment for human+machine hybrid activities over {FiWi} enhanced
  networks,'' \emph{IEEE Trans. Netw. Service Manag.}, vol.~16, no.~4, pp.
  1840--1854, 2019.

\bibitem{coll_computing}
M.~Chowdhury and M.~Maier, ``Collaborative computing for advanced tactile
  internet human-to-robot {(H2R)} communications in integrated {FiWi}
  multirobot infrastructures,'' \emph{IEEE Internet Things J.}, vol.~4, no.~6,
  pp. 2142--2158, 2017.

\bibitem{cross_layer_3}
C.~She, C.~Yang, and T.~Q.~S. Quek, ``Cross-layer transmission design for
  tactile internet,'' in \emph{IEEE GLOBECOM}, 2016, pp. 1--6.

\bibitem{TI_channel_access}
Y.~Feng \emph{et~al.}, ``Hybrid coordination function controlled channel access
  for latency-sensitive tactile applications,'' in \emph{IEEE GLOBECOM}, 2017,
  pp. 1--6.

\bibitem{cross_layer_1}
N.~Gholipoor, H.~Saeedi, and N.~Mokari, ``Cross-layer resource allocation for
  mixed tactile internet and traditional data in scma based wireless
  networks,'' in \emph{IEEE Wireless Commun. Netw. Conf. Wksp. (WCNCW)}, 2018,
  pp. 356--361.

\bibitem{NFV_JSAC}
Z.~Xiang \emph{et~al.}, ``Reducing latency in virtual machines: Enabling
  tactile internet for human-machine co-working,'' \emph{IEEE J. Sel. Areas
  Commun.}, vol.~37, no.~5, pp. 1098--1116, 2019.

\bibitem{aijaz2017shaping}
A.~Aijaz \emph{et~al.}, ``{Shaping 5G for the Tactile Internet},'' in \emph{5G
  Mobile Commun.}\hskip 1em plus 0.5em minus 0.4em\relax Springer, 2017, pp.
  677--691.

\bibitem{goldsmith2005wireless}
A.~Goldsmith, \emph{Wireless communications}.\hskip 1em plus 0.5em minus
  0.4em\relax Cambridge university press, 2005.

\bibitem{robot_comput}
G.~Hu, W.~P. Tay, and Y.~Wen, ``Cloud robotics: {A}rchitecture, challenges and
  applications,'' \emph{IEEE Netw.}, vol.~26, no.~3, pp. 21--28, 2012.

\bibitem{MEC_random_var}
W.~Yuan and K.~Nahrstedt, ``Energy-efficient soft real-time {CPU} scheduling
  for mobile multimedia systems,'' \emph{SIGOPS Oper. Syst. Rev.}, vol.~37,
  no.~5, p. 149–163, Oct. 2003.

\bibitem{MEC_gamma_1}
D.~Han \emph{et~al.}, ``Offloading optimization and bottleneck analysis for
  mobile cloud computing,'' \emph{IEEE Trans. Commun.}, vol.~67, no.~9, pp.
  6153--6167, 2019.

\bibitem{MEC_gamma_2}
S.~Jošilo and G.~Dán, ``Selfish decentralized computation offloading for
  mobile cloud computing in dense wireless networks,'' \emph{IEEE Trans. Mobile
  Comput.}, vol.~18, no.~1, pp. 207--220, 2019.

\bibitem{jsac_comp_model}
X.~Li \emph{et~al.}, ``Wirelessly powered crowd sensing: {J}oint power
  transfer, sensing, compression, and transmission,'' \emph{IEEE J. Sel. Areas
  Commun.}, vol.~37, no.~2, pp. 391--406, 2019.

\bibitem{data_comp_decomp_1}
R.~Kothiyal \emph{et~al.}, ``Energy and performance evaluation of lossless file
  data compression on server systems,'' in \emph{Proc. of SYSTOR}, ser. SYSTOR
  '09.\hskip 1em plus 0.5em minus 0.4em\relax New York, USA: Association for
  Computing Machinery, 2009.

\bibitem{data_comp_decomp_4}
J.~Ren, Y.~Ruan, and G.~Yu, ``Data transmission in mobile edge networks:
  Whether and where to compress?'' \emph{IEEE Commun. Lett.}, vol.~23, no.~3,
  pp. 490--493, 2019.

\bibitem{comm_lett_comp}
J.-B. Wang \emph{et~al.}, ``Joint optimization of transmission bandwidth
  allocation and data compression for mobile-edge computing systems,''
  \emph{IEEE Commun. Lett.}, vol.~24, no.~10, pp. 2245--2249, 2020.

\bibitem{data_comp_decomp_2}
M.~Burrows \emph{et~al.}, ``On-line data compression in a log-structured file
  system,'' in \emph{Proc. Int. Conf. Architectural Support Programming
  Languages Operating Syst.}, ser. ASPLOS V.\hskip 1em plus 0.5em minus
  0.4em\relax New York, NY, USA: Association for Computing Machinery, 1992, p.
  2–9.

\bibitem{data_comp_decomp_3}
Z.~Zhang \emph{et~al.}, ``Efficient {I/O} for neural network training with
  compressed data,'' in \emph{IEEE Int. Parallel Distrib. Process. Symp.
  (IPDPS)}, 2020, pp. 409--418.

\bibitem{gamma_pdf}
H.~C. Thom, ``A note on the gamma distribution,'' \emph{Monthly Weather
  Review}, vol.~86, no.~4, pp. 117--122, 1958.

\bibitem{gamma_approx_1}
P.~G. Moschopoulos, ``The distribution of the sum of independent gamma random
  variables,'' \emph{Annals of the Institute of Statistical Mathematics},
  vol.~37, no.~3, pp. 541--544, 1985.

\bibitem{gamma_approx_2}
H.~Murakami, ``Approximations to the distribution of sum of independent
  non-identically gamma random variables,'' \emph{Mathematical Sciences},
  vol.~9, no.~4, pp. 205--213, 2015.

\bibitem{gamma_approx_3}
L.~Tlebaldiyeva, B.~Maham, and T.~A. Tsiftsis, ``Capacity analysis of
  device-to-device mmwave networks under transceiver distortion noise and
  imperfect csi,'' \emph{IEEE Trans. Veh. Technol.}, vol.~69, no.~5, pp.
  5707--5712, 2020.

\bibitem{halawa2017nvidia}
H.~Halawa \emph{et~al.}, ``{NVIDIA Jetson} platform characterization,'' in
  \emph{Proc. European Conf. Parallel Process.}\hskip 1em plus 0.5em minus
  0.4em\relax Springer, 2017, pp. 92--105.

\bibitem{log_concave_1}
M.~Bagnoli and T.~Bergstrom, ``Log-concave probability and its applications,''
  \emph{Economic Theory}, vol.~26, no.~2, pp. 445--469, 2005.

\bibitem{unimodal_convex}
S.~Dharmadhikari and K.~Joag-Dev, \emph{Unimodality, Convexity, and
  Applications}.\hskip 1em plus 0.5em minus 0.4em\relax Elsevier, 1988.

\end{thebibliography}

\end{document}